\newcommand{\Jiwon}{\textsc{Cheong Ji-Won}\xspace}
\renewcommand{\Jiwon}{\raisebox{-0.5ex}{\includegraphics[height=2.5ex]{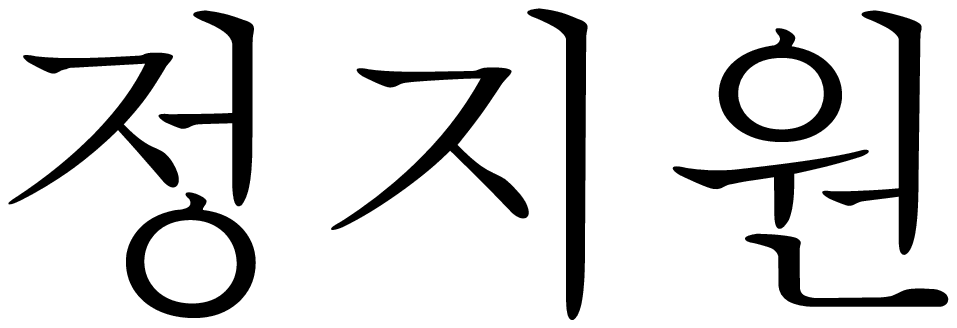}}\xspace}
\newcommand{\IN}{\mathbb{N}}
\newcommand{\IR}{\mathbb{R}}
\newcommand{\Card}{\operatorname{Card}}
\newcommand{\VCdim}{\operatorname{VCdim}}
\newcommand{\loglog}{\operatorname{loglog}}
\newcommand{\polylog}{\operatorname{polylog}}
\newcommand{\VIS}{\mathcal{V}}
\newcommand{\Vis}{\operatorname{V}}
\newcommand{\VSP}{\operatorname{VSP}}
\newcommand{\Prob}{\operatorname{\textbf{Prob}}}
\newcommand{\calA}{\mathcal{A}}
\newcommand{\calL}{\mathcal{L}}
\newcommand{\calO}{\mathcal{O}}
\newcommand{\calNP}{\ensuremath{\mathcal{NP}}}
\newcommand{\calS}{\mathcal{S}}
\newcommand{\calT}{\mathcal{T}}
\newcommand{\calR}{\mathcal{R}}
\newcommand{\COMMENTED}[1]{}
\newcommand{\mycite}[2]{\cite[\textsc{#1}]{#2}}
\spnewtheorem{fact}[theorem]{Fact}{\bfseries}{\itshape}
\spnewtheorem{paradigm}[theorem]{Paradigm}{\bfseries}{\itshape}
\spnewtheorem{hypothesis}[theorem]{Hypothesis}{\bfseries}{\itshape}
\spnewtheorem{algorithm}[theorem]{Algorithm}{\bfseries}{\itshape}
\spnewtheorem{scholium}[theorem]{Scholium}{\bfseries}{\itshape}
\begin{document}
\title{Planar Visibility Counting\thanks{Supported
by DFG projects \texttt{Me872/12-1} within SPP 1307
and by \texttt{Zi1009/1-2}.\protect\newline
The last author is grateful to
\protect\Jiwon (n\'e \textsc{Otfried Schwarzkopf})
for the opportunity to visit \textsf{KAIST}.}}
\titlerunning{Planar Visibility Counting}
\author{M.~Fischer \and M.~Hilbig \and C.~J\"{a}hn \and F.~Meyer auf der Heide \and M.~Ziegler}
\institute{Heinz Nixdorf Institute
and University of Paderborn, 33095 GERMANY}
\authorrunning{M.~Fischer, M.~Hilbig, C.~J\"{a}hn, F.~Meyer auf der Heide, M.~Ziegler}
\date{}
\makeatletter
\renewcommand\maketitle{\newpage
  \refstepcounter{chapter}%
  \stepcounter{section}%
  \setcounter{section}{0}%
  \setcounter{subsection}{0}%
  \setcounter{figure}{0}
  \setcounter{table}{0}
  \setcounter{equation}{0}
  \setcounter{footnote}{0}%
  \begingroup
    \parindent=\z@
    \renewcommand\thefootnote{\@fnsymbol\c@footnote}%
    \if@twocolumn
      \ifnum \col@number=\@ne
        \@maketitle
      \else
        \twocolumn[\@maketitle]%
      \fi
    \else
      \newpage
      \global\@topnum\z@   
      \@maketitle
    \fi
    \thispagestyle{empty}\@thanks
    \def\\{\unskip\ \ignorespaces}\def\inst##1{\unskip{}}%
    \def\thanks##1{\unskip{}}\def\fnmsep{\unskip}%
    \instindent=\hsize
    \advance\instindent by-\headlineindent
    \if@runhead
       \if!\the\titlerunning!\else
         \edef\@title{\the\titlerunning}%
       \fi
       \global\setbox\titrun=\hbox{\small\rm\unboldmath\ignorespaces\@title}%
       \ifdim\wd\titrun>\instindent
          \typeout{Title too long for running head. Please supply}%
          \typeout{a shorter form with \string\titlerunning\space prior to
                   \string\maketitle}%
          \global\setbox\titrun=\hbox{\small\rm
          Title Suppressed Due to Excessive Length}%
       \fi
       \xdef\@title{\copy\titrun}%
    \fi
    \if!\the\tocauthor!\relax
      {\def\and{\noexpand\protect\noexpand\and}%
      \protected@xdef\toc@uthor{\@author}}%
    \else
      \def\\{\noexpand\protect\noexpand\newline}%
      \protected@xdef\scratch{\the\tocauthor}%
      \protected@xdef\toc@uthor{\scratch}%
    \fi
    \if@runhead
       \if!\the\authorrunning!
         \value{@inst}=\value{@auth}%
         \setcounter{@auth}{1}%
       \else
         \edef\@author{\the\authorrunning}%
       \fi
       \global\setbox\authrun=\hbox{\small\unboldmath\@author\unskip}%
       \ifdim\wd\authrun>\instindent
          \typeout{Names of authors too long for running head. Please supply}%
          \typeout{a shorter form with \string\authorrunning\space prior to
                   \string\maketitle}%
          \global\setbox\authrun=\hbox{\small\rm
          Authors Suppressed Due to Excessive Length}%
       \fi
       \xdef\@author{\copy\authrun}%
       \markboth{\@author}{\@title}%
     \fi
  \endgroup
  \setcounter{footnote}{\fnnstart}%
  \clearheadinfo}
\makeatother

\maketitle
\def\thefootnote{\fnsymbol{footnote}}\addtocounter{footnote}{1}
\begin{abstract}
For a fixed virtual scene (=collection of simplices) $\calS$
and given observer position $\vec p$, how many elements of
$\calS$ are weakly visible (i.e. not fully occluded by others) 
from $\vec p$?
The present work explores the trade-off between query time and preprocessing space
for these quantities in 2D: exactly, in the approximate deterministic,
and in the probabilistic sense. 
We deduce the \emph{existence} of an $\calO(m^2/n^2)$ space data structure 
for $\calS$ that,
given $\vec p$ and time $\calO(\log n)$, allows to approximate
the ratio of occluded segments up to arbitrary
constant absolute error; here $m$ denotes
the size of the \textsf{Visibility Graph}---which may be quadratic,
but typically is just linear in the size $n$ of the scene $\calS$.
On the other hand, we present a data structure 
\emph{constructible} in 
$\calO\big(n\cdot\log(n)+m^2\cdot\polylog(n)/\ell\big)$
preprocessing time and space
with similar approximation properties 
and query time $\calO(\ell\cdot\polylog n)$,
where $1\leq\ell\leq n$ is an arbitrary parameter.
We describe an implementation of this approach 
and demonstrate the practical benefit of the 
parameter $\ell$ to trade memory for query time
in an empirical evaluation
on three classes of benchmark scenes.
\end{abstract}
\begin{minipage}[c]{0.95\textwidth}
\renewcommand{\contentsname}{}
\tableofcontents
\end{minipage}
\stepcounter{footnote}
\section{Motivation and Introduction}
Back in the early days of computer graphics, \textsf{hidden surface removal}
(and \textsf{visible surface calculation})
was a serious computational problem: 
for a fixed virtual 3D scene and given observer position, 
(partition and) select those scene primitives
which are (and are fully) visible to the observer.
Because of its importance, this problem
has received considerable scientific attention with many suggestions
of deep both combinatorial and geometric algorithms 
for its efficient solution.
The situation changed entirely when the (rather unsophisticated)
\textsf{z-buffer} algorithm became available in common consumer graphics cards:
with direct hardware support and massive parallelism (one gate per pixel),
it easily outperforms software-based approaches with their 
(usually huge factors hidden in) asymptotic big-Oh running times
\cite{McKenna}.
For a fixed resolution, the z-buffer can render scenes of $N$
triangles on-line in time essentially linear in $N$ with a small constant.
However even this may be too slow in order to visualize virtual worlds 
consisting of several hundreds of millions of triangles at
interactive frame rates. Computer graphics literature is
filled with suggestions of how to circumvent this problem;
for example by approximating (in some intuitive, informal sense) 
the observer's views. Here, the benefit of a new algorithm
is traditionally demonstrated by evaluating it, and
comparing it to some previous `standard' algorithm,
on few `standard' benchmark scenes and on selected hardware.
We on the other hand are interested in algorithms with
provable properties, and to this end restrict to

\subsection{Conservative Occlusion Culling} \label{s:Culling}

\begin{definition} \label{d:culling}
Objects which are hidden to the observer behind (possibly a collection of) other
objects \emph{may}, but need not, be filtered from the stream sent to 
the rendering hardware,
whereas any at least partially visible object \emph{must} be visualized.
\end{definition}
Here, ``conservative'' reflects that the 
rendering algorithm must not affect the visual apprearance
compared to the bruce-force approach of sending all objects
to the hardware. Occlusion culling can speed up the 
visualization particularly of very large scenes (e.g.
virtual worlds as in \textsf{Second Life} or
\textsf{World of Warcraft}) where, composed from
literally billions of triangles, typically `just'
some few millions are actually visible at any instant.
Other scenes, or viewpoints within a scene, 
admit no sensible occlusion; for instance
the leaves of a virtual forest naturally do not fully
screen sight to the sun or to each other, similarly
for CAD scenes of lattice or similar constructions.
In such cases, spending computational efforts on occlusion culling
is futile and actually bound to a net performance \emph{loss}.
Between those extremes, and particularly for an observer
moving between occluded and free parts of a large scene,
the algorithmic overhead of 
more or less thoroughly filtering out hidden primitives
generally trades off against the benefit in
reduced rendering complexity. Put differently:
Graphics hardware taking care of the visibility problem
anyway opens the chance to hybridize with software performing 
either coarse (and quick) or careful (and slow) culling
and leave the rest to the z-buffer.

\subsection{Adaptive Occlusion Culling}
It is our purpose is to explore this trade-off 
and to make algorithms adapt to each 
specific virtual scene and observer position
in order to exploit it in a
well-defined and predictable way. To this end
we propose so-called \emph{visibility counts}
(the number of primitives weakly visible from a
given observer position) as a quantitative measure
of how densely occluded a rendering frame is, and whether and by how much 
occlusion culling therefore can be, expected to pay off.
For technical reasons employed in Sections~\ref{s:Tradeoff} and later, 
the formal notion slightly more generally
captures the visibility of `target' scenes
through `occluder' scenes:

\begin{definition}[visibility count] \label{d:Count}
For a scene $\calS=\{S_1,\ldots,S_n\}$ of
`geometric primitives' $S_i\subseteq\IR^d$,
a subset of `\emph{targets}' $\calT\subseteq\calS$,
and an observer position $\vec p\in\IR^d$, let
\[ \VIS(\calS,\vec p,\calT)\;:=\;\big\{
T\in\calT\big| \exists \vec q\in T: \forall S\in\calS\setminus\{T\}: 
 [\vec p,\vec q]^\circ\cap S=\emptyset \big\} \]
and denote by $\Vis(\calS,\vec p,\calT):=\Card\VIS(\calS,\vec p,\calT)$
the number of objects in $\calT$ 
weakly visible (i.e. not fully occluded)
from $\vec p$ through $\calS$.
Here, $[\vec p,\vec q]^\circ:=\{\lambda\cdot\vec p+(1-\lambda)\cdot\vec q:0<\lambda<1\}$
denotes the (relatively open)
straight line segment connecting $\vec p$ and $\vec q$.
\end{definition}
For scenes $\calS$ and observer positions $\vec x$
with $\Vis(\calS,\vec x,\calS)\ll\Card\calS$,
occlusion culling is likely to pay off;
whereas for $\Vis(\calS,\vec x,\calS)\cong\Card\calS$
it is not. Quantitatively we have the following
\begin{hypothesis} \label{h:Count}
Each culling algorithm $\calA$ can be assigned 
a threshold function $\theta_{\calA}(n)\in[0,1]$ 
such that, for scenes $\calS$ and observer positions $\vec x$
with visibility \emph{ratios}
$\Vis(\calS,\vec x,\calS)/\Card(\calS)$
(significantly) beyong $\theta_{\calA}(n)$,
it yields a net rendering benefit and 
(significantly) below does not.
\end{hypothesis}

\COMMENTED{
\section{Un-/Related Work} \label{s:Unrelated}
Algorithms that automatically (and provably, see Section~\ref{s:Provable}) 
adapt to each specific setting they employed in,
occur at various places in theoretical computer science;
e.g. in cached environments \cite{Sanders},
or on parallel computers \cite{Olaf}. 
In fact nowadays graphics adapters
do constitute highly sophisticated 
concurrent computing systems \cite{CUDA}.
A formal notion of adaptivity is based on 
one or more parameters to quantify hardware
properties: algorithms are then devised to
tune to, and analyzed and their success prediced
with respect to, these parameters.
In the case of computer graphics, the traditional and
primary parameter counts the \emph{number of triangles per second}
some adapter is capable of processing.
More detailed approaches also take into 
account e.g. the screen area covered by these triangles \cite{Wonka}
and/or the number of color changes in the rendering stream \cite{Krokowski}.
Specificially with respect to culling algorithms, 
modelling new asynchronous \emph{hardware occlusion test} features
are a present direction of research \cite{Wimmer}.

\subsection{Input-Adaptive Algorithms} \label{s:Adaptive}
In the present work we focus on adaptivity with respect to
various kinds of inputs. This, again, is a typical topic
of computer science: for example
devising sorting algorithms which take 
advantage of partially sorted instances, i.e. whose 
running time depends not only on the number $n$ of elements 
but also on a measure $k$ of disorder
\cite{Mehlhorn,Sorting}. Ideally for ordered inputs
one attains optimal linear time $\calO(n)$,
for totally unordered ones 
again optimal time $\calO(n\cdot\log n)$,
and somehow interpolated times in between.
This can be regarded as a (very low-complexity) case of
\emph{parametrized algorithms} \cite{Fellows,Niedermeier,Grohe}.
Other examples (of higher complexity) are abundant 
particularly in graph theory where
\calNP-complete problems become polynomial-time computable
depending on the value of an parameter $k$ like,
e.g., tree-width \cite{Bodlaender} or genus \cite{Jianer}.

We aim for rendering algorithms
whose running time depends not only on the number
$n$ of primitives the scene (=input) consists of,
but also takes advantage of possible additional features
captured by an additional parameter $k$
to accelerate the (otherwise optimally linear) running time:
here, the visibility count.

In all cases of adaptivity it is important that the algorithm
does not need to know (i.e. is \emph{oblivious} to) the
value of the parameter $k$. This may be achieved by including
a subroutine to efficiently determine,
or sufficiently approximate, $k$. In the case of 
tree-width, doing so is a current thread of research \cite{Fomin};
and the present work attacks the similar problem
for visibility counts.
}

\subsection{Combinatorial Geometry and Randomized Computation} \label{s:Provable}
Adaptivity constitutes an important issue in Computational Geometry;
for instance in the context of \textsf{Range Searching} problems
whose running time is preferably \emph{output sensitive},
i.e. of the form $\calO\big(f(n)+k\big)$ where $n$ denotes
the overall number of objects and $k$ those that are actually
reported; compare \cite{Filtering}.

Adaptivity is of course a big topic in computer graphics as well.
However this entire field, driven by the impetus to quickly 
visualize (e.g. at 20fps) 
concrete scenes in newest interactive video games,
generally focuses on innovative heuristics and techniques
at a tremendous pace.
We on the other hand are interested in algorithms with
provable properties based on formal and sound analyses
and in particular
with respect to well-defined measures of adaptivity.
This of course calls for an application of
computational and combinatorial geometry
\cite{deBerg,Edelsbrunner}.

\begin{paradigm}[Computational Geometry in Computer Graphics] \label{p:Paradigm}
For interactive visualization of very large virtual scenes
of size $N\gtrsim 10^7$, algorithms must run in 
sublinear time $\calO(N^{\alpha})$, $\alpha<1$,
using preprocessed data structures 
of almost linear space $\calO(N^{1+\epsilon})$,
$\epsilon\ll1$, provably!
\end{paradigm}
Here (time and) space complexity refers to the
number of (operations on) unit-size real coordinates 
used (performed) by an algorithm---as opposed to, e.g., 
rationals of varying bitlength\footnote{See however
Item~d) in Section~\ref{s:Perspective} below}.
Also visibility is considered in the geometric sense
(as opposed to e.g. pixel-based notions):
point $\vec q$ is visible from observer position $\vec p$
if both can be connected by an ideal light ray
(=straight-line segment not intersecting any other
part of the scene), recall Definition~\ref{d:Count}.
Our algorithm features a parameter $1\!\leq\!\ell\!<\!n$ to trade
preprocessing space for query time.

Randomized algorithms are quite common in computer science for
their efficiency and implementation simplicity. They have also
entered the field of computer graphics. Here these
techniques are employed to render only a small random sample 
of the (typically very large) scene in such a way
that it appears similar to the entire scene
\cite{Wand1,Wand2,visibilitysampling}.
Our goal, on the other hand, is to approximate
the \emph{count} of visible objects (Definition~\ref{d:Count}),
not their appearance.

\subsection{Visibility} \label{s:Visibility}
Visibility comprises a highly active field of research,
both heuristically 
and in the sound framework of computational geometry \cite{VisibilitySurvey}.
Particularly the latter has proven combinatorially and algorithmically 
non-trivial already in the plane \cite{ORourke,PlanarVisibility}.
Here the case of (simple) polygons is well studied \cite{Charneau};
and so is point--point, point--segment, and segment--segment visibility
for scenes $\calS$ of $n$ non-crossing line segments,
captured e.g. in the \textsf{Visibility Graph} data structure \cite{VisibilityGraph}.
Its nodes correspond either to segments or to 
segment endpoints (or to both: a bipartite graph); 
and two nodes get joined
by an edge if one can partly see the other.
Weak segment--segment visibility for instance 
amounts to the $\calO(n^2)$ questions 
(namly for each pair of segments $A$ and $B$)
of whether there exist points $\vec a\in A$ and
$\vec b\in B$ such that $\vec a$ is visible from $\vec b$.

\medskip
We, too, ask for weak segment visibility;
however in our case 
the observer is not restricted to positions on
segments of the scene but may move freely between
them. For instance we shall want to efficiently calculate
visibility counts for singleton targets
$\Vis(\calS,\vec x,\{T\})$:

\begin{problem} \label{p:Visibility}
Fix a collection $\calS$ of non-intersecting segments 
in the plane and one further segment $T$.
Preprocess $(\calS,T)$ into an almost linear 
(or merely worst-case subquadratic) 
size data structure such as
to decide in sublinear time queries of the following type: \\
Given $\vec x\in\IR^2$, is $T$ (partly) visible through $\calS$ ?
\end{problem}
Sections~\ref{s:IntervalTree}, \ref{s:RotSweep}, and \ref{s:VSP} recall
two algorithms that meet either the space or the time
requirement but not both.

\subsection{Overview} \label{s:Overview}
An empirical verification of Hypothesis~\ref{h:Count} in dimensions
2, $2\tfrac{1}{2}$, and $3$,
is the subject of a separate work \cite{SEA}.
Our aim here is to explore the complexity of calculating 
the visibility counts, thus providing rendering algorithms
with the information for deciding whether to cull or not.
In view of the large virtual scenes and the high frame rates 
required by applications, we have to consider both 
computational resources, query time and preprocessing space,
simultaneously. Section~\ref{s:Exact} focuses on 
the problem of calculating visibility counts exactly,
mostly based on the \textsf{Visibility Space Partition};
our main result here is a preprocessing algorithm
with output-sensitive running time.
Section~\ref{s:Approximate} weakenes the problem
to approximate calculations:
first showing the \emph{existence} of a rather
small data structure with logarithmic query time
in Section~\ref{s:Coarse}.
However this data structure seems hard to
\emph{construct} in reasonable time, 
therefore Sections~\ref{s:Chernoff}ff 
consider approaches based on random sampling.
Section~\ref{s:Evaluation} describes an implementation
and evaluation of this algorithm.

\section{Exact Visibility Counting} \label{s:Exact}
This section recalls combinatorial worst-case approaches for calculating
visibility counts according to Definition~\ref{d:Count}.
Many efficient algorithms are known for visibility \emph{reporting}
problems, that is for determining the \emph{view} of an observer
\cite{Flatland2}; however since reporting may involve output of
linear size, such aproaches are generally inappropriate for our goal 
of \emph{counting} in \emph{sub}linear time. On the other hand,
logarithmic time becomes easily feasible when permitting
quartic space in the worst-case based on the Visibility Space Partition (VSP). 
The main result of this section, Theorem~\ref{t:OutputSensitive} 
yields an output-sensitive time 
algorithm for computing the VSP of a given set of line segments in the plane.

\subsection{Reverse Painter's Algorithm} \label{s:IntervalTree}
Prior to the hardware z-buffer, 
\textsf{Painter's Algorithm} was sometimes considered
as a means to hidden surface elimination (at least in the
2D case): Draw all objects in back-to-front order,
thus making closer ones paint over (and thus correctly cover)
those further away. This of course relies on being
able to efficiently find such an order: which is easily
seen impossible in general unless we `cut' some objects. 
Now two-dimensional \textsf{BSP Trees} 
provide a means to find such an order and a way to cut
objects appropriately 
without increasing the overall size too much. We report
from \mycite{Section~12}{deBerg}:

\begin{fact}
Given a collection $\calS$ of $n$ non-crossing line segments in the plane,
a BSP Tree of $\calS$ can be constructed in time
and space $\calO(n\cdot\log n)$.
\end{fact}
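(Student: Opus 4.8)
The plan is to build the BSP by a randomized \emph{auto-partition}: I restrict the candidate splitting lines to the $n$ supporting lines $\ell_1,\dots,\ell_n$ of the segments in $\calS$ themselves, and I insert them in the order given by a uniformly random permutation $s_1,\dots,s_n$ of the segments. Concretely I would run the following recursion on a region $R$ (initially $R=\IR^2$) carrying the list of segment fragments that meet $R$: among the fragments present, pick the one whose segment has the smallest index in the fixed random order, split $R$ along that segment's full supporting line, store the fragment at the current node, distribute the remaining fragments into the two resulting half-regions (cutting in two any fragment that straddles the line), and recurse on both sides. The leaves are the cells of the partition, and the number of auxiliary cuts created during the recursion controls the total size of the tree.

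The heart of the argument is bounding the expected number of fragments, equivalently the expected number of times some line $\ell_j$ cuts a fixed segment $s_i$. First I would fix $s_i$; for every other $s_j$ whose line actually crosses $s_i$, I walk along $\ell_j$ from $s_j$ to its intersection point with $s_i$ and let $k_{ij}$ be the number of segments of $\calS$ (other than $s_i,s_j$) that cross this portion of $\ell_j$. The key observation is that $\ell_j$ can cut $s_i$ only if, among the $k_{ij}+2$ segments consisting of $s_i$, $s_j$, and the $k_{ij}$ intervening ones, the segment $s_j$ is the earliest in the random order: if an intervening $s_t$ comes first, then $\ell_t$ crosses $\ell_j$ between $s_i$ and $s_j$ and has already separated them into distinct cells before $\ell_j$ is considered, and if $s_i$ comes first, then $s_i$ is consumed as a splitter and can no longer be cut. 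Since all orderings are equally likely, $\Prob[\ell_j \text{ cuts } s_i]\le 1/(k_{ij}+2)$.

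Next I would sum this bound over $j$. On each of the two sides of $s_i$, ordering the relevant segments by increasing distance along their lines makes the values $k_{ij}$ run through $0,1,2,\dots$, so the quantities $k_{ij}+2$ run through $2,3,4,\dots$ and the expected number of cuts to $s_i$ telescopes into a harmonic sum $\sum_{k}1/(k+2)=\calO(\log n)$. Summing over all $n$ choices of $s_i$ gives an expected total of $n+\calO(n\log n)=\calO(n\log n)$ fragments, which bounds the size of the constructed tree; a standard Markov-plus-retry argument then converts this into a firm worst-case guarantee.

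The step I expect to be the main obstacle is the \emph{time} bound, not the size bound. A naive recursion spends at every node effort proportional to the number of fragments routed through it, and summing this over all nodes does not obviously stay within $\calO(n\log n)$, since the per-level fragment count and the depth of the recursion interact unfavourably. To reach the claimed time I would argue that the expected total work is governed by the expected number of (fragment, node) incidences and that the same backward analysis that controls cuts also limits how deeply any fragment descends; the delicate point is decoupling these two quantities. Since the statement is quoted as a Fact from \mycite{Section~12}{deBerg}, I would present the size analysis in full as above and defer the detailed time accounting to that reference.
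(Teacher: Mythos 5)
The paper offers no proof of this Fact---it is quoted verbatim from de Berg et al., Section~12---and your randomized auto-partition with backward analysis is exactly the argument that reference uses for the size bound, so your overall strategy is the right one. However, your summation step contains a genuine error. The harmonic ``telescoping'' structure is valid when you fix the \emph{cutting line} and sum over the segments it may cut, not when you fix the segment being cut and sum over the lines that may cut it. For a fixed cutter $s_j$, all candidate victims lie along the single line $\ell_j$, so walking away from $s_j$ in either direction the intervening counts take the values $0,1,2,\dots$ and $\sum_k 1/(k+2)=\calO(\log n)$. For a fixed victim $s_i$, by contrast, the quantities $k_{ij}$ are measured along \emph{different} lines $\ell_j$, and nothing forces them to be distinct: place $n-1$ short segments in a fan aimed at $s_i$, each with an unobstructed line of sight to $s_i$; then $k_{ij}=0$ for every $j$, your sum degenerates to $\sum_j 1/2=\Theta(n)$, and in fact the expected number of cuts \emph{received} by $s_i$ really is $\Theta(n)$ in this configuration (every fan segment drawn before $s_i$ cuts one of its fragments). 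So your intermediate claim ``the expected number of cuts to $s_i$ is $\calO(\log n)$'' is false, not merely unproven. The repair costs one line: keep your (correct) bound $\Prob\big[\ell_j \text{ cuts } s_i\big]\leq 1/(k_{ij}+2)$, but exchange the order of summation---for each fixed $j$, bound the expected number of cuts \emph{made} by $\ell_j$ by $\calO(\log n)$ using the distances measured along $\ell_j$, and then sum over $j$ to obtain the $\calO(n\cdot\log n)$ bound on the total number of fragments.

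You are also right to flag the time bound as the delicate part, but deferring it to the reference does not close the issue: the cited textbook analyzes precisely your naive recursion (linear work per node, recursion depth up to $n$) and obtains only $\calO(n^2\cdot\log n)$ expected construction time; the $\calO(n\cdot\log n)$ \emph{time} claim of the Fact requires a more careful construction (in the spirit of Paterson and Yao) rather than the straightforward top-down distribution of fragments. Since the paper itself states the result without proof, proving the size bound and citing the rest is a reasonable plan---but as written, both halves of your argument (the per-segment summation and the time accounting) fall short of the bounds you set out to establish.
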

Now instead of drawing the cut segments in back-to-front order
(relative to the observer),
feeding them into an \textsf{Interval Tree} in front-to-back order
reveals exactly which of them are weakly visible and which not.
Since insertion into an Interval Tree of size $n$ takes time
$\calO(n\cdot\log n)$ we conclude \cite{Sequin}:

\begin{lemma} \label{l:IntervalTree}
Given a collection $\calS$ of $n$ non-crossing line segments in the plane
and an observer position $\vec x\in\IR^2$,
$\Vis(\calS,\vec x,\calS)$ can be calculated in
time $\calO(n\cdot\log^2 n)$ and space $\calO(n\cdot\log n)$.
\end{lemma}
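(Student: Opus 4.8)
The plan is to combine the two data structures mentioned in the paragraph just preceding the lemma: a \textsf{BSP Tree} to obtain a back-to-front (equivalently, front-to-back) ordering of the segments relative to $\vec x$, and an \textsf{Interval Tree} to detect which segments remain at least partly visible once the nearer ones have been inserted. By the preceding \textbf{Fact}, the BSP Tree for $\calS$ can be built in time and space $\calO(n\cdot\log n)$, and it cuts the $n$ original segments into $\calO(n)$ fragments while furnishing, for any fixed observer $\vec x$, a depth order with no cyclic overlaps. So the first step is to descend the BSP Tree guided by $\vec x$, reading off the fragments in front-to-back order; this traversal costs $\calO(n)$ after the $\calO(n\cdot\log n)$ construction.

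Next I would set up the angular \textsf{Interval Tree}. Because $\vec x$ is fixed, I project each fragment onto the circle of directions around $\vec x$: a fragment spans an angular interval, and a target segment is weakly visible precisely when some subinterval of its angular span is not yet covered by the union of the angular intervals of the strictly nearer fragments. Processing the fragments in front-to-back order, I insert each one's angular interval and, just before inserting, query whether that interval is already fully covered by what is present; if it is not fully covered, the fragment (hence its parent segment) is weakly visible and I increment the count, taking care to credit each original segment at most once even though it may have been split into several fragments. The \textsf{Interval Tree} supports each insertion and each coverage query in $\calO(\log n)$ time on a structure holding $\calO(n)$ intervals.

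The running-time accounting then falls out: the BSP construction contributes $\calO(n\cdot\log n)$ time and $\calO(n\cdot\log n)$ space; the front-to-back traversal adds $\calO(n)$; and the $\calO(n)$ fragment insert/query operations on the Interval Tree cost $\calO(\log n)$ each, for $\calO(n\cdot\log n)$ total. The extra logarithmic factor quoted in the statement, giving $\calO(n\cdot\log^2 n)$ time, is most naturally explained by the need to maintain, per node of the Interval Tree, a secondary structure recording the current coverage so that ``is this angular interval already fully occluded?'' can be answered and updated—each such secondary query or update itself costing $\calO(\log n)$, which multiplies the $\calO(\log n)$ tree traversal. The space stays $\calO(n\cdot\log n)$, dominated by the BSP Tree.

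The step I expect to be the genuine obstacle is the correctness of the occlusion test inside the Interval Tree, i.e. verifying that ``not fully covered by the union of nearer fragments'' is exactly equivalent to weak visibility in the sense of Definition~\ref{d:Count}. Two points need care: first, that the BSP-induced front-to-back order is a \emph{consistent} depth order (no fragment is simultaneously in front of and behind another as seen from $\vec x$), which is precisely what the BSP splitting guarantees and is the reason we cut segments at all; and second, the bookkeeping that prevents a single original segment, now living as several fragments with disjoint angular spans, from being counted more than once or missed entirely when only one of its fragments pokes out from behind the occluders. Once these two invariants are pinned down, the time and space bounds are the routine consequence sketched above.
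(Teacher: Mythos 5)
Your overall route --- BSP tree for a consistent front-to-back order of fragments, then an angular coverage test against an interval structure around $\vec x$ --- is exactly the paper's route. But there is a genuine error in your accounting, and it forces you to invent machinery the argument does not need. You assert that the BSP cuts the $n$ segments into $\calO(n)$ fragments. That is false in the worst case: the number of fragments equals the size of the BSP tree, which for $n$ disjoint segments in the plane is $\Theta(n\cdot\log n)$ (this is precisely why the Fact quoted before the lemma states \emph{space} $\calO(n\cdot\log n)$, not $\calO(n)$; an $\calO(n)$-size BSP is not achievable in general). Having (incorrectly) arrived at only $\calO(n\cdot\log n)$ total work, you then rationalize the lemma's stated $\calO(n\cdot\log^2 n)$ bound by postulating a secondary structure per node of the interval tree with $\calO(\log n)$ cost per access, multiplying two logarithmic factors. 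That is not where the second logarithm comes from.

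The correct accounting is the paper's: there are $\calO(n\cdot\log n)$ fragments, each fragment is inserted into (and queried against) the interval structure in $\calO(\log n)$ time, giving $\calO(n\cdot\log n)\cdot\calO(\log n)=\calO(n\cdot\log^2 n)$ total time, while the space remains $\calO(n\cdot\log n)$ for the BSP tree and for an interval structure holding $\calO(n\cdot\log n)$ intervals. Your two correctness concerns --- that the BSP order is a valid depth order, and that a segment split into several fragments is credited at most once --- are legitimate and handled the same way in the paper's argument; once you replace the fragment count by $\calO(n\cdot\log n)$ and drop the nested secondary structure, your proof matches the intended one.
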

Notice that preprocessing $\calS$ into a BSP Tree
accelerates the running time `only' by a constant factor.

\subsection{Rotational Sweep} \label{s:RotSweep}
One can improve Lemma~\ref{l:IntervalTree} by a
logarithmic factor:

\begin{lemma} \label{l:RotSweep}
Given a collection $\calS$ of $n$ non-crossing line segments in the
plane and an observer position $\vec x$,
$\Vis(\calS,\vec x,\calS)$ can be calculated in
time $\calO(n\cdot\log n)$ and space $\calO(n)$.
\end{lemma}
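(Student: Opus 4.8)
The plan is to compute the entire angular view from $\vec x$ by a single rotational (angular) sweep and to read off the visibility count as a by-product, rather than deciding visibility for each of the $n$ segments separately. This is exactly the classical angular-sweep algorithm for the visibility polygon from a point among non-crossing segments, specialised so as to merely \emph{count} the distinct segments appearing on the view boundary.

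First I would translate each segment $S_i\in\calS$ into the angular interval $[\alpha_i,\beta_i]$ subtended by its two endpoints as seen from $\vec x$, and generate two events per segment: an \emph{insertion} at angle $\alpha_i$ and a \emph{deletion} at angle $\beta_i$. Sorting these $2n$ events by angle costs $\calO(n\cdot\log n)$ and will dominate the running time. The sweep then rotates a ray emanating from $\vec x$ through $2\pi$, maintaining in a balanced binary search tree $\mathcal{D}$ the set of segments currently met by the ray, keyed by their distance from $\vec x$ along the present ray direction.

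The correctness of a single distance key hinges on the non-crossing hypothesis: since no two segments intersect, the distance-ordering of any two of them along the ray can change only at an endpoint, that is, at an event, so between consecutive events the order stored in $\mathcal{D}$ is stable and every comparison $\mathcal{D}$ ever performs is between two segments that both meet the current ray and is therefore well defined. A segment is weakly visible from $\vec x$ precisely when, for at least one ray direction inside its angular span, it is the element of $\mathcal{D}$ closest to $\vec x$; hence after each insertion or deletion I would query the minimum of $\mathcal{D}$ and set a boolean mark on whichever segment now occupies the front. Each of the $2n$ events triggers one insertion or deletion plus one minimum query, all in $\calO(\log n)$, so the sweep runs in $\calO(n\cdot\log n)$ time and $\calO(n)$ space, and $\Vis(\calS,\vec x,\calS)$ is simply the number of marked segments.

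I expect the main obstacle to lie in the initialisation and the degeneracies rather than in the sweep itself. Segments whose angular span straddles the starting ray direction must already populate $\mathcal{D}$ before the sweep begins; I would handle this by shooting the initial ray, collecting the segments it crosses and sorting them by distance in $\calO(n\cdot\log n)$ to seed $\mathcal{D}$ correctly. The genuinely delicate points are the tie-breaking rules when the ray passes simultaneously through several endpoints, when two segments share an endpoint, or when $\vec x$ happens to lie on the line through a segment so that its angular span degenerates; these are resolved by a consistent symbolic perturbation of the event order, and verifying that the front-of-$\mathcal{D}$ test still marks exactly the weakly visible segments in all such configurations is where the argument must be carried out with care.
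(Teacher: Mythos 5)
Your proposal is correct and follows essentially the same route as the paper's own (sketched) proof: a rotational sweep over the $2n$ endpoints in angular order around $\vec x$, maintaining the segments crossed by the ray in a proximity-ordered structure, marking the frontmost segment at each event, with the $\calO(n\cdot\log n)$ sorting step dominating. Your additional attention to seeding the structure with segments straddling the initial ray and to degenerate event orders merely fills in details the paper's sketch leaves implicit.
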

\begin{proof}{Sketch}
First mark all segments \texttt{invisible}.
Then consider the $2n$ endpoints of $\calS$ in angular order
around $\vec x$ while keeping track of the order of the 
segments according to their proximity to the observer,
the closest one thus being \texttt{visible}:
whenever a new segments starts insert it
into an appropriate data structure
in time $\calO(\log n)$,
whenever one ends remove it.
Since the initial sorting also takes time $\calO(n\cdot\log n)$,
we remain within the claimed bounds.
\qed\end{proof}
Nevertheless the running time still fails to meet
Paradigm~\ref{p:Paradigm}.
Also, these approaches seem to offer
no way to take advantage of a singleton target 
for the purpose of Problem~\ref{p:Visibility}.

\subsection[Visibility Space Partition (VSP)]{Visibility Space Partition} \label{s:VSP}
Lemmas~\ref{l:IntervalTree} and \ref{l:RotSweep} work without any,
and do not benefit asymptotically from, 
preprocessing of the fixed scene $\calS$.
On the other hand by the so-called \textsf{locus approach}---storing 
\emph{all} visibility counts in a \textsf{Visibility Space Partition} 
(VSP)---they can later be recovered in logarithmic running time
\cite{AspectSurvey}:

\begin{lemma} \label{l:VSP}
\begin{enumerate}
\item[a)]
For a collection $\calS$ of 
$n$ non-crossing line segments in the plane,
there exists a partition of $\IR^2$ into 
$\calO(n^4)$ convex cells such that,
for all observer positions $\vec x\in C$ within one cell $C$, 
$\VIS(\calS,\cdot,\calS)$ is the same.
\item[b)]
The data structure indicated in a) and including 
for each cell its corresponding visibility count 
$\Vis(\calS,C,\calS)$
uses storage $\calO(n^4\cdot\log n)$ and
can be computed in time $\calO(n^5\cdot\log n)$.
Then given an observer position $\vec x$,
its corresponding cell $C$, and the associated 
visibility count, can be identified in time $\calO(\log n)$.
\item[c)]
When charging only real data and operations
(more specifically: If an $n$-bit string is considered 
to occupy one memory cell and the union of two of them
computable within one step), the above data structure
including for each cell its visibility $\VIS(\calS,C,\calS)$
uses storage $\calO(n^4)$ and can be calculated
in time $\calO(n^4\cdot\log^2n)$.
\item[d)]
Item~a) extends to the case
of $(d-1)$-simplices in $d$-dimensional space
in that the number of convex cells with equivalent 
observer visibility can be bounded by 
$\calO(n)^{d^2}$.
\end{enumerate}
\end{lemma}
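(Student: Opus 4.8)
The plan is to realize the VSP as the arrangement of a carefully chosen family of \emph{critical lines} and then to argue that the visibility set is combinatorially invariant within each resulting cell. For part~a) I would take as critical lines exactly those passing through two of the $2n$ segment endpoints; there are $\binom{2n}{2}=\calO(n^2)$ of them. Viewed from $\vec p$, the angular order of the endpoints together with the front-to-back (occlusion) order of the segments determines which portions of which segments are seen, and hence determines $\VIS(\calS,\vec p,\calS)$. As $\vec p$ moves, this combinatorial view can change only at the instant two endpoints become collinear with $\vec p$, i.e. exactly when $\vec p$ crosses one of the critical lines. Consequently $\VIS(\calS,\cdot,\calS)$ is constant on each open cell of the arrangement; and a simple arrangement of $m=\calO(n^2)$ lines has $\binom{m}{2}+m+1=\calO(n^4)$ convex cells, giving a).

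For part~b) I would store the line arrangement together with a standard point-location structure (e.g. a trapezoidal search DAG) over its $\calO(n^4)$ faces; this uses $\calO(n^4)$ space and answers a ``which cell contains $\vec x$'' query in $\calO(\log n^4)=\calO(\log n)$ time. Annotating each of the $\calO(n^4)$ cells with its count, an integer in $\{0,\dots,n\}$ and hence $\calO(\log n)$ bits, raises the storage to $\calO(n^4\cdot\log n)$. To fill in the counts I would pick one representative point per cell and invoke Lemma~\ref{l:RotSweep}, costing $\calO(n\cdot\log n)$ per cell and $\calO(n^5\cdot\log n)$ in total, which dominates the $\calO(n^4)$ arrangement-construction time.

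Part~c) improves the accounting under the stronger model by storing at each cell the full set $\VIS(\calS,C,\calS)$ encoded as a length-$n$ indicator string, which counts as a single memory cell and so yields $\calO(n^4)$ storage. Rather than recomputing each set from scratch, I would traverse the adjacency graph of the arrangement and maintain the indicator incrementally: crossing a single critical line (the collinearity of one endpoint pair) toggles the visibility status of only $\calO(1)$ segments, so each step needs only a constant number of bit operations to update the set once the affected segment has been located. Charging $\calO(\polylog n)$ per cell for this bookkeeping over the $\calO(n^4)$ cells gives the stated $\calO(n^4\cdot\log^2 n)$ time, the initial set being obtained once via Lemma~\ref{l:RotSweep}.

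Finally, for part~d) the same recipe generalizes: in $\IR^d$ the combinatorial view of $n$ $(d-1)$-simplices changes only across critical hypersurfaces determined by $d$ boundary features, of which there are $\calO(n^d)$, each an algebraic surface of bounded degree; their arrangement in $\IR^d$ has complexity $\calO\big((n^d)^d\big)=\calO(n)^{d^2}$, recovering $\calO(n^4)$ at $d=2$. The main obstacle throughout is part~a): rigorously showing that the endpoint-pair lines capture \emph{all} visibility events and that nothing else perturbs $\VIS$ inside a cell---equivalently, that the combinatorial view is genuinely locally constant. The analogous ``only $\calO(1)$ segments change per crossing'' claim underlying c), and the precise enumeration of event hypersurfaces in d), rest on the same event analysis and must be carried out with care.
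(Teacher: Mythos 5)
Your parts a) and b) are essentially the paper's own proof: the same $\binom{2n}{2}$ lines through segment-endpoint pairs, the same $\calO(n^4)$ bound on the cells of their arrangement, logarithmic point location over it, and one invocation of Lemma~\ref{l:RotSweep} per cell giving $\calO(n^5\cdot\log n)$ total time. For d) the paper stays with the $\binom{nd}{d}\leq\calO(n)^d$ \emph{hyperplanes} spanned by $d$-tuples of the $dn$ vertices and bounds their arrangement by $\calO(N)^d$; your variant with bounded-degree algebraic event surfaces yields the same count of cells but forfeits their convexity, which the statement explicitly asserts, so you should phrase d) with hyperplanes as well.

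The genuine gap is in c). Your invariant---the length-$n$ indicator string alone---cannot be maintained across a crossing. The claim that only $\calO(1)$ segments can change status when one line $L_{\vec a,\vec b}$ is crossed is correct (only $S_1$ and $S_2$ owning $\vec a,\vec b$, and the first segment $S$ hit by the ray beyond, can be affected; this is exactly the analysis in the proof of Theorem~\ref{t:OutputSensitive}). But \emph{whether} the bit of, say, $S$ actually flips when the sight window through $\vec a$ and $\vec b$ closes depends on whether that window was the \emph{only} one through which $S$ was weakly visible: a segment may be visible through several gaps simultaneously, and the bit vector does not record this multiplicity. Hence the update is not a function of (current bit string, crossed edge), and ``a constant number of bit operations'' is not well defined. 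The repair is to carry richer state along the traversal, e.g.\ a per-segment counter of maximal visible arcs in the current view: each crossing changes $\calO(1)$ counters by $\pm1$, and a bit flips exactly when its counter moves between zero and nonzero; in addition you must say how ``the affected segment has been located''---a ray-shooting query per crossing, $\calO(\log n)$ time after $\calO(n^2)$ preprocessing, which fits the budget. Once fixed, your cell-by-cell traversal is in fact the strategy the paper employs in step~iii) of Theorem~\ref{t:OutputSensitive}, whereas for Lemma~\ref{l:VSP}c) itself the paper argues quite differently: a divide-and-conquer scheme that overlays VSPs of sub-scenes of doubling size through $\calO(\log n)$ phases of $\calO(n^4\cdot\log n)$ each, merging the stored visibility bitstrings by unions (which is precisely why the model charges one step per union). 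So your route is a genuinely different---and, with the multiplicity counters added, arguably more self-contained---alternative to the paper's, but as written it does not go through.
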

\begin{proof}
\begin{enumerate}
\item[a)]
Draw lines through all $\binom{2n}{2}$ 
pairs of the $2n$ segment endpoints. 
It is easy to see that, in order for a near segment
to appear in sight, the observer has to cross 
one of these $\calO(n^2)$ lines; 
compare Lemma~\ref{l:TimeVersusSpace} below.
Hence, within each of the $\calO(n^4)$ cells
they induce, the subset of segments weakly visible
remains the same; compare Figure~\ref{f:aspect3}.
\item[b)]
$\calO(n^2)$ lines induce an arrangement of
overall complexity, and can be constructed
in time $\calO(n^4)$ \mycite{Section~8.3}{deBerg}.
The visibility
number associated with each cell is bounded
by $n$ and hence can be stored using
$\calO(\log n)$ bits; its calculation
according Lemma~\ref{l:RotSweep}
takes time $\calO(n\cdot\log n)$ each.
Finally, the planar subdivision induced by the $\calO(n^4)$
edges of the arrangement can be turned into
a data structure supporting point-location in
$\calO(\log n)$ \mycite{Theorem~6.8}{deBerg}.
\item[c)]
In the proof to b), constructing the arrangement
(i.e. the planar partition into cells)
and determining the visibility count of each cell
were two separate steps which we now merge
using divide-and-conquer:
In the first phase
calculate the VSP of the first two segments of $\calS$,
then that of the next, and so on; in each VSP store, for
each cell, the visibility vector, i.e. the 0/1 bitstring
recording which segments from $\calS$ are visible (1) and
which are not (0). 
In the next phase
overlay the first two VSPs of two segments into 
one of the first four segments, and store for each refined
cell the union of the 0/1 bitstrings: thus 
keeping track of its visibility;
similarly for the next two VSPs of the next four segments.
Then proceed to VSPs of eight segments each; and so on.
We therefore have $\calO(\log n)$ phases;
and, according to \mycite{Theorem~2.6}{deBerg},
the last (as well as each previous) phase 
takes time $\calO(n^4\cdot\log n)$.
\item[d)]
Similarly to the proof of a), consider all
$dn$ vertices of the $n$ simplices.
Any $d$-tuple of them induces a 
hyperplane; and a change in sight requires
the observer to cross some of these 
$N:=\binom{nd}{d}\leq \calO(n)^d$
hyperplanes. 
$N$ hyperplanes in $d$-space induce
an arrangement of complexity
$\calO(N)^{d}$ \cite{Edelsbrunner}.
\qed
\end{enumerate}
\end{proof}

\begin{figure}[htb]\centerline{%
\includegraphics[width=0.7\textwidth,height=0.3\textwidth]{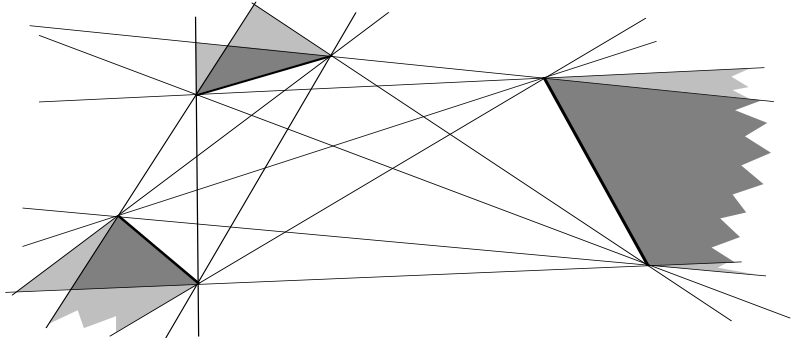}}%
\caption{\label{f:aspect3}Visibility Space Partition
of three segments: an observer in the dark gray area
can see exactly one segment, in the light gray area
exactly two, and otherwise all three of them.}
\end{figure}

\subsection{Size of Visibility Space Partitions} \label{s:VSPsize}
Lemma~\ref{l:VSP}a+b) bounds the 
size of the VSP data structure
by order $n^4$. It turns out that this
bound is sharp in the worst case---but
not for many `realistic' examples.
This is due to many of the $\Theta(n^2)$ lines
employed in the proof of Lemma~\ref{l:VSP}a)
inducing unnecessarily fine subdivisions
of viewpoint space. In Figure~\ref{f:VSPlines}a)
for instance, the dotted parts are dispensible.

In order to avoid trivialities, we want to restrict 
to \emph{non}degenerate segment configurations $\calS$.
However this notion is subtle because the lines 
induced by $\calS$ defining the VSP typically \emph{are}
degenerate: many (more than two) of them meet in one
common (segment end) point.

\begin{definition} \label{d:Nondegenerate}
A family $\calS$ of segments in the plane is \textsf{nondegenerate} if 
\begin{enumerate}
\item[i)] any two segments meet only in their common endpoints.
\item[ii)] No three endpoints share a common line;
\item[iii)] Any two lines, defined by pairs of endpoints, do meet.
\end{enumerate}
\end{definition}
We have already referred to (and implicitly
employed in Lemma~\ref{l:VSP} a refinement of)
the Visibility Space Partition; so here finally
comes the formal

\begin{definition} \label{d:VSP}
For two non-degenerate collections $\calS$ and $\calT$
of segments in the plane, partition all viewpoints $\vec p\in\IR^2$ 
into classes having equal visibility $\VIS(\calS,\vec p,\calT)$.
Moreover let $\VSP(\calS,\calT)$ denote the collection
of connected components of these equivalence classes.
The \textsf{size} of $\VSP$ is the number of line segments
forming the boundaries of these components.
\end{definition}
Observe that
$\VSP(\calS,\calT)$ indeed constitutes a planar subdivision:
a coarsening of the $\calO(n^4)$ convex polygons
induced by the arrangement of $\calO(n^2)$ lines from
the proof of Lemma~\ref{l:VSP}a).
In fact a class of viewpoints of equal visibility 
can be disconnected and delimited by very many segments,
hence merely counting the number of classes or cells
does not reflect the combinatorial complexity.
Lemma~\ref{l:VSP}a) and Lemma~\ref{l:VSPsize}~a)
correspond to \mycite{Exercise~6.1.7}{Matousek}.

\begin{lemma} \label{l:VSPsize}
\begin{enumerate}
\item[a)]
Even for a singleton target $T$, there exist
a nondegenerate line segment configurations $\calS$ such that
$\VSP(\calS,\{T\})$ has $\Omega(n^4)$ separate
connected components.
\item[b)]
To each $n$,
there exists a nondegenerate configuration $\calS$
of at least $n$ segments admitting a convex planar subdivision
of complexity $\calO(n)$ such that, from within each cell,
the view to $\calS$ is constant; i.e.
$\VSP(\calS,\calS)$ has linear size.
\item[c)]
The size of $\VSP(\calS,\calS)$ is at most quadratic
in the size $m$ of the \textsf{Visibility Graph} 
of $\calS$ (recall Section~\ref{s:Visibility}).
\item[d)]
A data structure as in Lemma~\ref{l:VSP} can be 
calculated in time $\calO(n\cdot\log n+m^2\cdot\log^2n)$
and space $\calO(m^2)$.
\end{enumerate}
\end{lemma}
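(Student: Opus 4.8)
The plan is to treat the four claims separately, dealing with the two extremal constructions (a) and (b) first and then with the structural bound (c), on which the algorithm (d) rests. For (a) I would give an explicit lower-bound configuration in the spirit of \mycite{Exercise~6.1.7}{Matousek}: keeping the target $T$ fixed and far away, place the $n-1$ occluders so that the $\Theta(n^2)$ lines spanned by their endpoint pairs lie in general position and hence cut the plane into $\Theta(n^4)$ cells (recall Lemma~\ref{l:VSP}a), and arrange the occluders into interleaved ``gratings'' so that, for $\Omega(n^4)$ of these cells, crossing the bounding critical line genuinely opens or closes the last sight-window onto $T$. Each such crossing separates a viewpoint that sees $T$ from one that does not, so $\VSP(\calS,\{T\})$ acquires $\Omega(n^4)$ distinct connected components; nondegeneracy (Definition~\ref{d:Nondegenerate}) is then restored by a generic infinitesimal perturbation, which can only increase the number of components. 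For (b) I would conversely exhibit, for each $n$, one concrete family of \emph{nested} (telescoping) occluders, arranged so that from every viewpoint the visible subset is a monotone function of how deep into the nesting the observer lies; then only $\calO(n)$ nested shadow-boundary lines carry an actual change of view, and a direct argument shows the induced convex subdivision has $\calO(n)$ edges, i.e.\ $\VSP(\calS,\calS)$ has linear size. This also shows that the bound in (c) is far from tight once occlusions are ``totally ordered''.

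The crux is (c). I would first characterise the boundaries of $\VSP(\calS,\calS)$: as the observer $\vec p$ crosses a VSP edge, exactly one target $T$ passes between visible and invisible, and at the transition the \emph{last} sight-window onto $T$ closes. That window is delimited by two endpoints $a,b$ (of $T$ and/or of occluders) that become collinear with $\vec p$ at the critical instant; crucially, for the window to have been open at all, the open segment $[a,b]^\circ$ must be unobstructed, i.e.\ $\{a,b\}$ is an edge of the (endpoint) \textsf{Visibility Graph} of Section~\ref{s:Visibility}. Hence every VSP edge lies on one of at most $\calO(m)$ lines---those through \emph{mutually visible} endpoint pairs---and $\VSP(\calS,\calS)$ is a coarsening of the arrangement of these $\calO(m)$ lines. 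Since an arrangement of $\calO(m)$ lines has complexity $\calO(m^2)$, the claimed bound follows; this is consistent with (a), whose $\Omega(n^4)$ example necessarily has a dense, $\Theta(n^2)$-edge visibility graph.

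Finally, (d) is the algorithmic realisation of (c). First build the visibility graph of the $n$ segments in time $\calO(n\cdot\log n+m)$ by a standard output-sensitive construction (cf.\ \cite{VisibilityGraph}) and read off the $\calO(m)$ critical lines; then construct their arrangement together with a point-location structure in time and space $\calO(m^2)$ (cf.\ \mycite{Section~8.3}{deBerg}, \mycite{Theorem~6.8}{deBerg}). To label every cell with its visibility count I would compute the count of one seed cell via Lemma~\ref{l:RotSweep} in $\calO(n\cdot\log n)$ and then traverse the dual adjacency graph of the arrangement, updating the count across each of its $\calO(m^2)$ edges. Since a single crossing changes the view by only one segment, each update amounts to a constant number of dynamic visibility queries, answerable in $\calO(\log^2 n)$ through the interval-tree machinery of Lemma~\ref{l:IntervalTree}; summing yields $\calO(n\cdot\log n+m^2\cdot\log^2 n)$ time and $\calO(m^2)$ space.

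The main obstacle I anticipate is the structural step in (c): proving rigorously that \emph{every} VSP edge is supported by a line through a \emph{mutually visible} endpoint pair, and not merely through some endpoint pair, since this is exactly what replaces the crude $\Theta(n^2)$ line count of Lemma~\ref{l:VSP}a) by the output-sensitive $\calO(m)$. The case analysis of which two endpoints bound the closing window, together with the verification that an obstructed pair can never be the critical one, is where the argument must be made airtight. The secondary delicate point is the construction in (a), where one must force $\Omega(n^4)$ genuinely distinct connected components rather than merely $\Omega(n^4)$ arrangement cells with possibly repeated visibility.
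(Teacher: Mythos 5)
Your parts a)--c) are essentially the paper's own argument. For a) the paper uses the same kind of double-grating configuration (a modification of \mycite{Fig.~8.13}{ORourke}): $T$ is visible exactly from $\Theta(n^2)$ crisscrossing stripes, and the $\Theta(n^4)$ bounded ``ellipses'' complementary to their union are pairwise separate connected components of the class ``$T$ hidden'', which settles exactly the delicate point you flag at the end. Your c) is precisely the paper's observation, stated contrapositively there: crossing $L_{\vec a,\vec b}$ cannot change any visibility when $\vec a$ and $\vec b$ are occluded from one another, so only $\calO(m)$ lines survive and their arrangement has complexity $\calO(m^2)$. For b) you propose nested occluders instead of the paper's cellular configuration (Figure~\ref{f:linearVSP}); this works provided each nested polygon has $\calO(1)$ edges (a convex $k$-gon already induces $\Theta(k^2)$ view classes in its exterior, so ``telescoping'' alone is not enough --- constant-size cells are what both constructions really exploit).

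The genuine gap is in d). Your labelling phase needs, for each of the $\calO(m^2)$ arrangement edges, to determine which segment(s) appear or disappear upon crossing, and you propose to get this from ``dynamic visibility queries, answerable in $\calO(\log^2 n)$ through the interval-tree machinery of Lemma~\ref{l:IntervalTree}''. No such query structure exists there or anywhere in the paper: Lemma~\ref{l:IntervalTree} is a one-shot algorithm computing the full count for a single viewpoint in time $\calO(n\cdot\log^2 n)$; it does not answer ``is $S$ visible from $\vec p$'' in polylogarithmic time. A query of that kind is exactly Problem~\ref{p:Visibility}, for which the paper's best trade-off (Theorem~\ref{t:TimeVersusSpace}) pays $\calO(n^4/\ell)$ space for query time $\calO(\ell\cdot\log n)$ --- so this step silently assumes a far stronger data structure than is available, and the claimed $\calO(n\cdot\log n+m^2\cdot\log^2 n)$ bound is unsupported as written. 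The paper's proof of d) avoids the issue entirely: it computes the $\calO(m)$ lines of c) via \cite{VisibilityGraph} in time $\calO(n\cdot\log n+m)$ and then reuses the divide-and-conquer overlay of Lemma~\ref{l:VSP}c), merging sub-VSPs in $\calO(\log n)$ phases while storing per-cell visibility bitstrings (unit cost in that model), which directly gives $\calO(m^2\cdot\log^2 n)$ time and $\calO(m^2)$ space. Your seed-and-traverse architecture can be repaired in the spirit of Theorem~\ref{t:OutputSensitive}: precompute and store with each critical line (piece) the induced visibility change by ray-shooting queries, $\calO(\log n)$ each after $\calO(n^2)$ preprocessing (cf.\ \mycite{Theorem~3.2}{Flatland2}), which stays within the stated bounds since $m=\Omega(n)$; but some such ingredient must be supplied explicitly, and also note that a single crossing can make one segment appear \emph{and} another disappear simultaneously, so the per-edge change is a constant-size record rather than ``only one segment''.
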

Since the Visibility Graph itself can have at most
quadratically more edges than vertices, Item~c)
strengthens Lemma~\ref{l:VSP}a).
Empirically we have found that a `random' scene
typically induces a VSP of roughly quadratic size.
This agrees with a `typical' scene to have a 
linear size Visibility Graph according to
\cite{ExpectedSize}.
\begin{proof}[Lemma~\ref{l:VSPsize}]
\begin{enumerate}
\item[a)]
Figure~\ref{f:ORourke} is a small modification
of \mycite{Fig.~8.13}{ORourke}. 
The long bottom line segment $T$
is visible from the upper half iff the observer can
peep through two successive gaps simultaneously,
i.e. from any position on the $\Theta(n^2)$ stripes
but not from the ellipses.
When moving from an ellipse to another, $T$
flashes into sight and is then hidden
again. There are $\Theta(n^4)$ such ellipses.
It is easy to see that this example is combinatorially
stable under small perturbation and hence can be made
non-degenerate.
\item[b)]
Consider Figure~\ref{f:linearVSP}:
Within each cell $C$ of the segment arrangement,
all segments are always visible;
hence $C$ is also a cell of the VSP.
And the exterior of $\calS$
gives rise to another 9 VSP cells.
\item[c)]
Recall the proof of Lemma~\ref{l:VSP}a);
but throw in the observation 
that crossing the line $L_{\vec a,\vec b}$
induced by two endpoints $\vec a$ and $\vec b$
(say, of segments $S_1,S_2\in\calS$, respectively)
does not induce a change in visibility
if $\vec a$ and $\vec b$ are occluded from each
other by some further segment $S\in\calS$:
compare Figure~\ref{f:VSPlines}.
Hence it suffices to consider at most as many
lines as the the number $m$ of edges in the Visibility Graph;
and these induce an arrangement of at most
quadratic complexity $\calO(m^2)$.
\item[d)]
Determine, according to \cite{VisibilityGraph}
in time $\calO(n\cdot\log n+m)$---or maybe
more practically in time $\calO(m\cdot\log n)$ \cite{OvermarsWelzl}---the 
$m$ lines mentioned in Item~c); then proceed as in Lemma~\ref{l:VSP}c).
\qed
\end{enumerate}
\end{proof}

\begin{figure}[htb]
\includegraphics[width=0.99\textwidth]{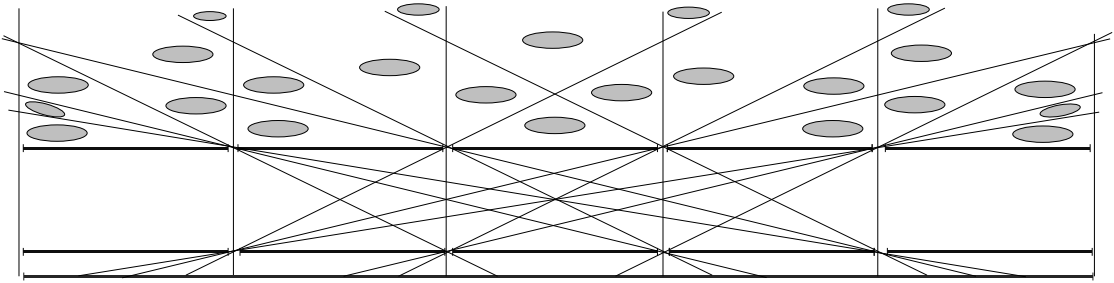}
\caption{\label{f:ORourke}Example of a segment
arrangement whose Visibility Space Partition
requires memory of order $n^4$}
\end{figure}

\begin{figure}[htb]
\begin{minipage}{0.45\textwidth}
\includegraphics[width=\textwidth,height=0.9\textwidth]{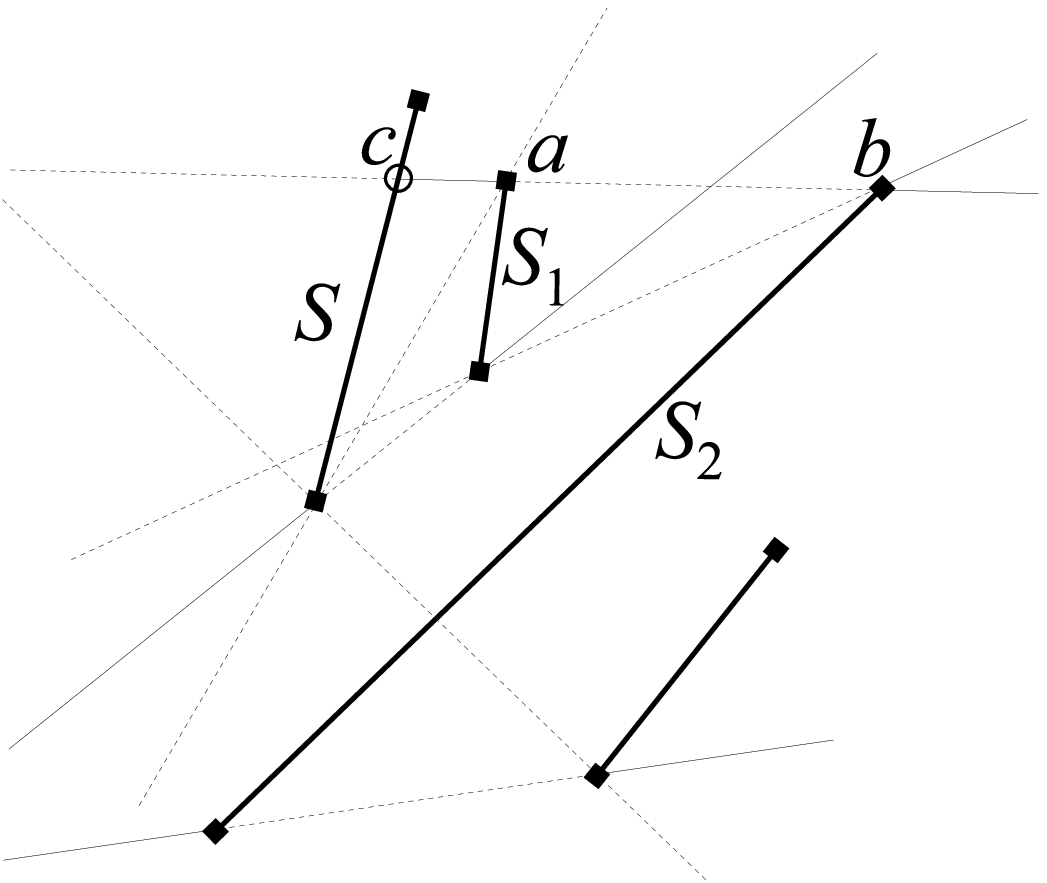}
\caption{\label{f:VSPlines}Visibility does \emph{not}
change to an observer crossing the \emph{dotted} 
(parts of) 
lines through pairs of segment endpoints.}
\end{minipage}\hfill\begin{minipage}{0.5\textwidth}
\includegraphics[width=\textwidth,height=0.84\textwidth]{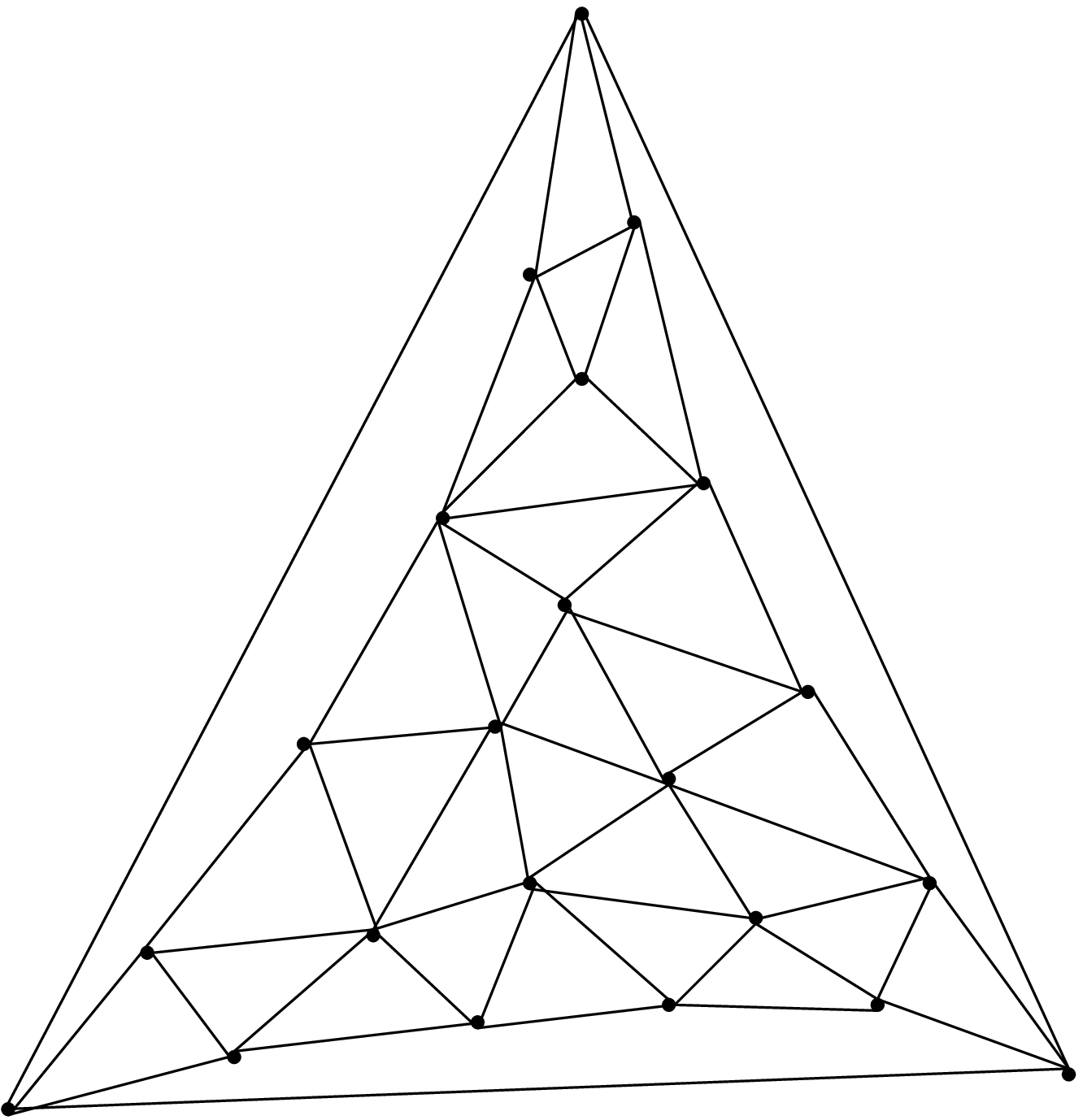}
\caption{\label{f:linearVSP}Example of a segment
arrangement whose Visibility Space Partition
has only linear size.}
\end{minipage}
\end{figure}

\subsection{Output-Sensitive VSP Calculation} \label{s:OutputSensitive}
In view of the large variation of VSP sizes from order $n$ to order $n^4$
according to Lemma~\ref{l:VSPsize}, the algorithms
indicated in Lemma~\ref{l:VSP}b+c)
for their calculation are reasonable only in case of large VSPs.
We now present an output-sensitive improvement
of Lemma~\ref{l:VSPsize}d):

\begin{theorem} \label{t:OutputSensitive}
In 2D, the data structure of Lemma~\ref{l:VSP}a-c) 
can be computed in time $\calO(n^2\cdot\log n+N\cdot\log n)$
in the sense of (the computational model referred to in)
Lemma~\ref{l:VSP}c),
where $N$ denotes the combinatorial complexity 
of $\VSP(\calS,\calS)$.
\end{theorem}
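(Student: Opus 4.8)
The plan is to replace the full-line arrangement underlying Lemma~\ref{l:VSP}a) by an output-sensitive construction steered by the Visibility Graph. First I would compute the Visibility Graph of $\calS$ in time $\calO(n\cdot\log n+m)$ (\cite{VisibilityGraph}), thereby identifying the $m$ mutually visible endpoint pairs. By Lemma~\ref{l:VSPsize}c) only the lines $L_{\vec a,\vec b}$ through such pairs can bound a VSP cell, so these are the sole candidates. For each such line I extract its \emph{labelled critical segment(s)}: the maximal portion across which one specific element of $\calS$ switches between visible and invisible, tagged with the index of that element. Each extraction amounts to an $\calO(\log n)$ ray-shooting/search against the angularly sorted endpoints, giving $\calO(m\cdot\log n)\leq\calO(n^2\cdot\log n)$ in total; the number of critical segments so produced is $\calO(m)$.

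Next I would feed these $\calO(m)$ critical segments into a Bentley--Ottmann plane sweep (\cite{deBerg}) to build their induced planar subdivision in time $\calO\big((m+I)\cdot\log m\big)$, where $I$ counts pairwise intersections. Since the arrangement of these boundary segments realizes $\VSP(\calS,\calS)$, of combinatorial complexity $\Theta(N)$ (recall the remark following Definition~\ref{d:VSP}), we have $I=\calO(N)$ and $m\leq\binom{2n}{2}=\calO(n^2)$, so the sweep costs $\calO\big((m+N)\cdot\log n\big)=\calO(n^2\cdot\log n+N\cdot\log n)$. The resulting subdivision is turned into a point-location structure exactly as in the proof of Lemma~\ref{l:VSP}b) in time $\calO(N\cdot\log n)$, using $\log N=\calO(\log n)$ because $N\leq\calO(n^4)$.

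It remains to attach to every cell its visibility vector in the unit-cost bitstring model of Lemma~\ref{l:VSP}c). I would first compute the vector of a single seed cell from scratch by the rotational sweep of Lemma~\ref{l:RotSweep} in time $\calO(n\cdot\log n)$. Then I traverse the dual graph of the subdivision by breadth-first search: crossing a critical segment toggles \emph{exactly} the single bit named by that segment's label, so each of the $\calO(N)$ dual edges is handled in $\calO(1)$ (one bitstring update), yielding all $\calO(N)$ vectors in $\calO(N)$ further time and space. Summing the three phases gives the claimed bound $\calO(n^2\cdot\log n+N\cdot\log n)$.

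The hard part will be the first phase: proving that the labelled critical segments faithfully and \emph{consistently} encode the whole VSP. Concretely one must establish (i) that each Visibility-Graph edge spawns only $\calO(1)$ maximal critical portions, so that their total number stays $\calO(m)$; (ii) that crossing one such portion flips precisely one element's visibility status, which supplies the toggle label; and (iii) that these purely local toggles are globally consistent, so the dual-graph propagation is well defined independently of the traversal order. Verifying (i)--(iii) while correctly handling the unavoidable coincidences of many critical lines at a shared endpoint---the configuration $\calS$ is nondegenerate in the sense of Definition~\ref{d:Nondegenerate}, yet the induced lines are not---is the delicate point; once it is settled, the sweep and the propagation are routine.
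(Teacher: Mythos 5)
Your proposal follows essentially the same route as the paper's proof: trim the candidate lines through endpoint pairs down to the portions whose crossing actually changes visibility, build their arrangement output-sensitively via segment intersection in $\calO(n^2\cdot\log n+N)$, attach logarithmic-time point location, and then compute one seed cell's visibility and propagate it cell by cell using the change labels stored with each boundary piece. The ``hard part'' you flag---that each line spawns $\calO(1)$ critical portions, each carrying a visibility change that is all-or-nothing along its entire length and is determined locally---is exactly what the paper asserts (with a short geometric argument) for its trimmed rays $L'_{\vec a},L'_{\vec b}$, namely that keeping or dropping such a ray depends only on the triple $S_1,S_2,S$ and that a change, if it occurs, occurs at every point of the ray.
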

\begin{proof}
We start as in the proof Lemma~\ref{l:VSP}
with the order $n^2$ lines induced by all pairs
of segment endpoints. Now the idea is to 
extend Lemma~\ref{l:VSPsize}c), namely to take
into consideration only those \emph{parts} the lines 
lines are cut into,
which to cross actually changes the visibility.
Indeed, these sub-lines constitute the boundaries
of the cells of the VSP and therefore determine
its complexity.
\begin{enumerate}
\item[i)] Take such a line $L_{\vec a,\vec b}$
  passing through segment endpoints $\vec a$ and $\vec b$
  of segments $S_1$ and $S_2\in\calS$.
  To an observer crossing $L_{\vec a,\vec b}$,
  the visibility can, but need not, change---and
  we want to determine if and where it does.
  First observe that the middle part $(\vec a,\vec b)$ 
  of $L_{\vec a,\vec b}$ can be disposed off right away
  (unless $\vec a$ and $\vec b$ are endpoints of the same segment,
  but these $\calO(n)$ cases give rise to only $\calO(n^2)$
  combinatorial complexity anyway)
  because crossing it never changes the visibility;
  compare Figure~\ref{f:VSPlines}. \\
  Now consider the 
  two remaining unbounded rays of $L_{\vec a,\vec b}$,
  $L_{\vec a}$ starting from $\vec a$ 
  and $L_{\vec b}$ starting from $\vec b$.
  If, say, $L_{\vec a}$ intersects some other segment $S\in\calS$
  in some point $\vec c$,
  then traversing the part of $L_{\vec a}$ beyond that point
  does not affect the visibility either, as $S_2$ is 
  `shielded' from sight by $S$ anyway; again cf. Figure~\ref{f:VSPlines}.
  So let $L'_{\vec a}:=(\vec a,\vec c)$ in this case,
  $L'_{\vec a}:=L_{\vec a}$ otherwise; and similarly for $L'_{\vec b}$.
\\
  Now crossing $L'_{\vec b}$ at some point
  may or may not alter the visibility of (at least one of) $S_1,S_2,S$
  (the latter being a segment `\emph{opposite}' 
  to $\vec b$ along $L_{\vec a,\vec b}$)
  but if it does so, then it does so at \emph{every} point of $L'_{\vec b}$.
  Hence we will either keep the whole $L'_{\vec b}$, or drop it 
  entirely; similarly for $L'_{\vec a}$.
\item[ii)]
  Now since those two alternatives---namely keeping or dropping
  $L'_{\vec b}$---depend only on $S_1,S_2,S$,
  they can be distinguished in constant time.
  Moreover, after $\calO(n^2)$ preprocessing time and space for $\calS$,
  each $L_{\vec a,\vec b}$ can be decomposed into the two parts
  $L'_{\vec a}$ and $L'_{\vec b}$ as the result of a 
  \textsf{ray shooting} query among $\calS$ in logarithmic time;
  see e.g. \mycite{Theorem~3.2}{Flatland2}.\\
  The line parts $L'$ kept will in general intersect each other.
  So next cut them into non-intersecting maximal sub-segments.
  By the above observations, these constitute the boundaries of
  the VSP. And as a standard \textsf{segment intersection} problem,
  they can be determined in time $\calO(n^2\cdot\log n+N)$;
  cf. e.g. \mycite{Section~2.5}{deBerg}. \\
  The resulting (sub-)segments give rise
  to a planar subdivision. For instance they cannot contain
  leaf (e.g. degree-1) vertices: circling around such a
  vertex one way would change the visibility 
  and the other way would not.
  Therefore the data structure admitting logarithmic-time 
  point-location in the VSP can be calculated 
  in space $\calO(N)$ and time $\calO(N\cdot\log n)$,
  recall \mycite{Theorem~6.8}{deBerg}.
\item[iii)] 
  Determining the visibilities as in the proof of
  Lemma~\ref{l:VSP}b) yields a factor $n$ overhead;
  and the divide-and-conquer approach of Lemma~\ref{l:VSP}c) 
  seems inapplicable because of the correlations between
  segments in Step~ii), namely cutting off $L_{\vec a,\vec b}$
  induced by $S_1,S_2$ at the first further segment $S$ hit.
  On the other hand, each $L'_{\vec a}$ (and similarly for $L'_{\vec b}$) 
  by construction induces a definite change in visibility
  when crossed: we may presume this information to have
  been stored with $L'_{\vec a}$ at the beginning of Step~ii).
  Hence we may start at one arbitrary cell of the arrangement,
  calculate its visibility according to Lemma~\ref{l:IntervalTree},
  and then traverse the rest of the arrangement cell by cell
  while keeping track of the visibility changes induced by 
  (and stored with) each cell boundary.
\qed\end{enumerate}\end{proof}

\subsection{Visibility of One Single Target: Trading Time for Space} \label{s:Tradeoff}
The query time obtained in Lemma~\ref{l:VSP} is very fast:
logarithmic (i.e. optimally)
where, according to Paradigm~\ref{p:Paradigm}, sublinear suffices.
Quite intuitively it should be possible to reduce the memory
consumption at the expense of increasing the time bound.
We achieve this for the case of one target,
that is the decision version of visibility 
$\vec x\mapsto\Vis(\calS,\vec x,\{T\})\in\{0,1\}$:

\begin{theorem} \label{t:TimeVersusSpace}
For each $1\leq\ell\leq n$, 
Problem~\ref{p:Visibility} can be solved,
after $\calO(n^4\cdot\log^2n/\ell)$ time 
and space $\calO(n^4/\ell)$ preprocessing,
within query time $\calO(\ell\cdot\log n)$.
\end{theorem}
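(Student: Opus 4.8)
We start from the observation already used in the proof of Lemma~\ref{l:VSP}a): the visibility of the single target $T$ can only change when the observer crosses one of the $\calO(n^2)$ lines through pairs of segment endpoints, and in the worst case the resulting single-target partition $\VSP(\calS,\{T\})$ has complexity $\Theta(n^4)$ (Lemma~\ref{l:VSPsize}a). Storing it in full gives $\calO(\log n)$ queries at the price of $\calO(n^4)$ space, so to trade space for time the plan is to partition these $\calO(n^2)$ critical lines into $\ell$ groups of $\calO(n^2/\ell)$ lines each and to keep, \emph{for each group separately}, an arrangement together with a point-location structure of the type underlying Lemma~\ref{l:VSP}b+c). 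Each group's arrangement has complexity $\calO\big((n^2/\ell)^2\big)=\calO(n^4/\ell^2)$, so the total space is $\ell\cdot\calO(n^4/\ell^2)=\calO(n^4/\ell)$; and the per-group structures, including per-cell visibility data, can be built by the divide-and-conquer of Lemma~\ref{l:VSP}c) restricted to each group, at a total cost of $\calO(n^4\cdot\log^2 n/\ell)$.

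A query point $\vec x$ is then point-located in all $\ell$ arrangements in time $\calO(\ell\cdot\log n)$, and its true cell of $\VSP(\calS,\{T\})$ is the common intersection of the $\ell$ located cells. What remains is to recover the single bit $\Vis(\calS,\vec x,\{T\})\in\{0,1\}$ from these $\ell$ local answers within the same budget. First I would fix a reference point $\vec o$, compute $\Vis(\calS,\vec o,\{T\})$ once via Lemma~\ref{l:IntervalTree}, and exploit that crossing a boundary of the single-target VSP \emph{flips} visibility, so that $\Vis(\calS,\vec x,\{T\})$ equals $\Vis(\calS,\vec o,\{T\})$ toggled by the parity of genuine visibility changes separating $\vec o$ from $\vec x$. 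By Theorem~\ref{t:OutputSensitive} each critical line carries, on intrinsically determined sub-parts, a ``toggles / does not toggle'' label that is decidable in constant time from the $\calO(1)$-size certificate of its defining segments $S_1,S_2$ and the first further segment $S$ hit by ray shooting. I would store with each group-cell a precomputed parity counting the toggling sub-edges of that group crossed along a path from $\vec o$'s group-cell; the target bit is then the combined parity (the \emph{exclusive-or}) of these $\ell$ per-group values, read off in $\calO(\ell)$ additional time.

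The hard part is justifying that this per-group parity is well defined, i.e.\ path-independent inside each group's arrangement. \emph{Globally} this holds because the boundary of $\VSP(\calS,\{T\})$ has no degree-$1$ vertices (Theorem~\ref{t:OutputSensitive}, step ii), so the visible region admits a consistent two-colouring; but once the critical lines are split into groups, a toggling sub-edge may terminate at a vertex whose matching continuation lies in a \emph{different} group, creating a spurious degree-$1$ vertex within a single group's arrangement and thereby destroying that group's two-colouring. The main obstacle is therefore to choose the grouping so that the toggling boundary stays ``closed'' inside each group---for instance by assigning all critical lines incident to a common endpoint, together with the first-hit truncations they induce, to the same group---while still balancing the groups to $\calO(n^2/\ell)$ lines and preserving the $\calO(n^4/\ell^2)$ per-group complexity. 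Once this grouping is in place, I expect the remaining accounting---arrangement construction, point location via the structures of Lemma~\ref{l:VSP}, and the constant-time toggle certificates---to be routine, yielding preprocessing $\calO(n^4\cdot\log^2 n/\ell)$, space $\calO(n^4/\ell)$, and query time $\calO(\ell\cdot\log n)$ as claimed.
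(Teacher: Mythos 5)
Your space and query-time accounting is fine, but the core of your argument---recovering the visibility bit as an exclusive-or of per-group parities---rests on exactly the point you flag and do not resolve, and that point is not a technicality: it fails. Path-independence of a group's parity requires every vertex of the plane graph formed by that group's toggling sub-edges to have even degree. Those toggling edges are determined globally by $\VSP(\calS,\{T\})$, and adjacent edges of one boundary component typically lie on \emph{different} critical lines; any degree-2 vertex whose two incident edges lie on lines of different groups becomes a degree-1 vertex inside each of those groups, so the two lines are forced into the same group. This forcing is transitive, and in precisely the instances for which the theorem is interesting it collapses everything: in the configuration of Lemma~\ref{l:VSPsize}a) (Figure~\ref{f:ORourke}) the region where $T$ is visible is delimited by $\Theta(n^4)$ small quadrilateral cells, each bounded by toggling edges on four distinct critical lines (two from each of two gaps), and since each line participates in such cells together with essentially all the others, the constraints chain across all $\Theta(n^2)$ lines. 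Hence no grouping whose per-group toggle sets are closed can also be balanced into $\ell$ groups of $\calO(n^2/\ell)$ lines. Your fallback suggestion (``assign all critical lines incident to a common endpoint to the same group'') is moreover not even well defined as a partition, since every critical line is incident to \emph{two} endpoints; by transitivity it, too, degenerates to a single group.

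The missing idea---and the paper's actual route---is to partition not the set of critical lines but the target $T$ itself into $\calO(\ell)$ sub-segments $T_1,\ldots,T_\ell$, each intersected by $\calO(n^2/\ell)$ of the lines. By Lemma~\ref{l:TimeVersusSpace} applied to the target $T_i$, the arrangement $\calA_i$ of those lines, together with the $\calO(n)$ lines through an endpoint of $T_i$ and an endpoint of some $S\in\calS$ and the $\calO(n)$ lines supporting segments of $\calS$, has constant $T_i$-visibility inside each of its $\calO(n^4/\ell^2+n^3/\ell+n^2)$ cells, so each cell directly stores one bit and no parity reconstruction is needed: each sub-problem is self-contained in precisely the sense your groups are not. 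A query point-locates in all $\ell$ arrangements and returns the disjunction of the $\ell$ bits ($T$ is visible iff some $T_i$ is) within $\calO(\ell\cdot\log n)$; summing the arrangement sizes gives $\calO(n^4/\ell+n^3+n^2\ell)=\calO(n^4/\ell)$ space, and the construction of Lemma~\ref{l:VSP}c) yields the claimed preprocessing time.
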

Such a trade-off result from time to space has become famous in the general
context of structural complexity \cite{TimeVersusSpace}.
Note that, obeying sublinear time, 
we can get arbitrarily close to cubic space---yet 
remain far from the joint resources consumption
of an interval tree (Section~\ref{s:IntervalTree}).
But first comes the already announced
\begin{lemma} \label{l:TimeVersusSpace}
Fix a collection $\calS\uplus\{T\}$ of $n+1$ non-crossing
segments in the plane.
Let $L_1,\ldots,L_k$ denote the $k=\binom{2n+2}{2}$ lines
induced by the pairs of endpoints of segments in $\calS\cup\{T\}$.
For an observer moving in the plane, the weak visibility
of $T$ can change only as she crosses 
\begin{itemize}
\item either one of the lines $L_i$ intersecting $T$ 
\item or someline supporting a segment $S\in\calS$;
\end{itemize}
compare Figure~\ref{f:tradeoff}.
\end{lemma}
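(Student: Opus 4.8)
The plan is to recast weak visibility of $T$ as the nonemptiness of a \emph{visible subinterval} structure on $T$ and then track its combinatorial changes. For an occluder $S\in\calS$ write $\sigma_S(\vec p):=\{\vec q\in T:[\vec p,\vec q]^\circ\cap S\neq\emptyset\}$ for the \emph{shadow} that $S$ casts on $T$ from $\vec p$. By Definition~\ref{d:Count}, $T$ is weakly visible from $\vec p$ exactly when $T\setminus\bigcup_{S\in\calS}\sigma_S(\vec p)\neq\emptyset$. First I would check that the umbra region $\{\vec q:[\vec p,\vec q]^\circ\cap S\neq\emptyset\}$ equals the intersection of the angular wedge spanned by $S$ as seen from $\vec p$ with the half-plane bounded by the supporting line of $S$ lying opposite to $\vec p$; being an intersection of two convex sets it is convex, so each $\sigma_S(\vec p)$ is a single subinterval of $T$ whose (at most two) endpoints are the points where the rays from $\vec p$ through the endpoints $\vec s_1,\vec s_2$ of $S$ meet $T$. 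Two consequences matter: $\sigma_S(\vec p)$ is empty whenever $\vec p$ lies on the same side of the supporting line of $S$ as the relevant part of $T$ (the non-crossing hypothesis keeps $S$ and $T$ from interleaving), and weak visibility is thus a purely combinatorial invariant of an arrangement of at most $n$ subintervals of the fixed segment $T$.

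Next I would argue by continuity. Let $\vec p$ travel along a path that avoids every type-1 line (a line $L_{\vec a,\vec b}$ through two endpoints of $\calS\cup\{T\}$ that meets $T$) and every type-2 line (a supporting line of some $S\in\calS$); I claim the Boolean $T\setminus\bigcup_S\sigma_S\neq\emptyset$ is then constant. The endpoints of the shadow intervals move continuously with $\vec p$, so the visible set changes its nonemptiness only through a \emph{critical} position $\vec p^\ast$ at which the (last remaining, or first appearing) visible portion degenerates to a single point $\vec q^\ast\in T$. The whole argument reduces to showing that every such $\vec q^\ast$ forces $\vec p^\ast$ onto a type-1 or type-2 line, which I would do by classifying how $\vec q^\ast$ is pinched.

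Three cases arise. (a) If $\vec q^\ast$ is interior to $T$ and is squeezed between the shadows of two \emph{distinct} occluder endpoints $\vec a,\vec b$, then the rays $\vec p^\ast\vec a$ and $\vec p^\ast\vec b$ both pass through $\vec q^\ast$, so $\vec a,\vec b,\vec q^\ast$ are collinear; hence $\vec p^\ast\in L_{\vec a,\vec b}$ and this line meets $T$ at $\vec q^\ast$, a type-1 line. (b) If instead a single shadow $\sigma_S$ collapses onto $\vec q^\ast$, its two bounding rays through $\vec s_1,\vec s_2$ coincide, forcing $\vec p^\ast,\vec s_1,\vec s_2$ collinear, i.e.\ $\vec p^\ast$ on the supporting line of $S$; this is also exactly the transition through which $\sigma_S$ switches sides and so appears or disappears, a type-2 line. (c) If $\vec q^\ast$ is an endpoint of $T$ reached by a shadow boundary through an occluder endpoint $\vec a$, then $\vec p^\ast,\vec a,\vec q^\ast$ are collinear and $\vec p^\ast\in L_{\vec a,\vec q^\ast}$, again a line through two endpoints meeting $T$ at $\vec q^\ast$. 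Assembling the cases shows the visibility of $T$ cannot change off the stated lines.

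I expect the main obstacle to be the rigorous verification in the second paragraph that \emph{nonemptiness} can only flip through the three degeneracies enumerated above and through no other transition: in particular, that the birth or death of a shadow interval is always mediated either by one of its endpoints reaching an end of $T$ (case~(c), type-1) or by $\vec p$ crossing the supporting line of the casting segment (case~(b), type-2), and never by some subtler near/far re-ordering along a viewing ray. Handling this cleanly requires using the non-crossing (and, where convenient, the nondegeneracy) hypotheses to exclude simultaneous higher-order coincidences and to pin down the front-to-back order of $S$ and $T$ along each ray; the convexity of the umbra established in the first step is what makes each shadow a single interval and thereby keeps this event list finite and complete.
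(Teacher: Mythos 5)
Your proof is correct, and it is in effect a rigorous rendering of what the paper dismisses in five lines as a ``standard continuity argument''---but the two arguments are organized around dual objects, so a comparison is worthwhile. The paper's proof is primal: it fixes a visible witness point $\vec x\in T$ and, as $\vec p$ moves, continuously re-adjusts $\vec x$ along $T$ so as to keep it visible, observing that this adjustment can fail only when the line through $\vec p$ and $\vec x$ hits endpoints of $\calS\cup\{T\}$; the entire case analysis is left implicit. You instead work with the complement: each occluder casts a convex shadow interval $\sigma_S(\vec p)$ on $T$, weak visibility is the nonemptiness of $T\setminus\bigcup_S\sigma_S(\vec p)$, and a flip of this Boolean forces a pinch event that you classify explicitly into (a)--(c), each pinned to a line of one of the two stated kinds. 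Your route buys exactly what the paper's sketch omits: convexity of the umbra (wedge intersected with the far half-plane of the supporting line) makes each shadow a single interval, so the event list is finite, and the classification shows it is complete; the price is the bookkeeping you honestly flag in your last paragraph. That remaining verification does close, by two observations. First, a boundary point of a shadow on $T$ can only lie on a ray from $\vec p$ through an occluder endpoint or be an endpoint of $T$: within the wedge, the third piece of the umbra boundary (along the supporting line of $S$) is $S$ itself, which the non-crossing hypothesis keeps off $T$; hence no ``subtler near/far re-ordering'' events exist. Second, the one tangential sub-case your list skips---a moving umbra first touching $T$ at an \emph{interior} point of $T$ along a bounding ray, which is how a shadow could be born away from the endpoints of $T$ and away from the supporting line of $S$---forces $T$ to be collinear with that ray, so $\vec p^\ast$ lies on the supporting line of $T$; since that line passes through both endpoints of $T$ and intersects $T$, it is itself one of the lines $L_i$ of the first kind, and the event is covered.
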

\begin{proof}
Standard continuity argument:
Let $\vec p$ denote the observer's position
and suppose point $\vec x\in T$ is visible,
i.e. the segment $[\vec p,\vec x]$ does not intersect
$S\in\calS$. Now move $\vec p$ until $\vec x$
is just about to become hidden behind $S\in\calS$.
Then start moving $\vec x$ on $T$ such as to remain visible.
Keep moving $\vec p$ and adjusting $\vec x$:
this is possible (at least) as long as the line
through $\vec p$ and $\vec x$ avoids 
all endpoints of $\calS\cup\{T\}$.
\qed\end{proof}

\begin{figure}[htb]\centerline{%
\includegraphics[width=0.9\textwidth]{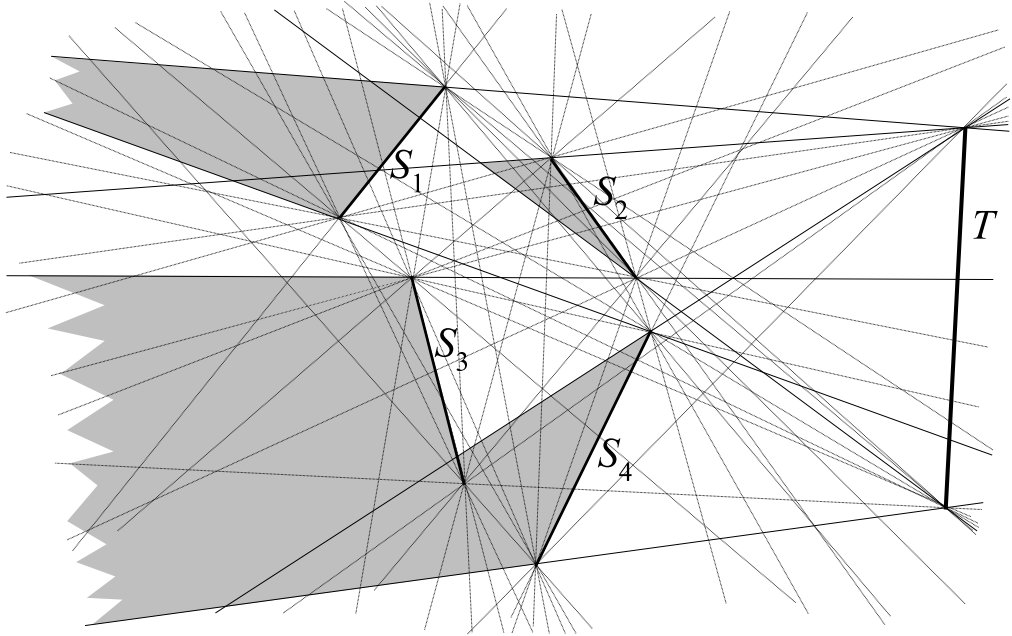}}%
\caption{\label{f:tradeoff}Regions where segment $T$ is weakly visible
through $\calS=\{S_1,S_2,S_3,S_4\}$ are delimited by lines through
endpoints of $\calS\cup\{T\}$ intersecting $T$
and by the segments themselves.}
\end{figure}

\begin{proof}[Theorem~\ref{t:TimeVersusSpace}]
Consider, as in the proof of Lemma~\ref{l:VSP}, 
the $\calO(n^2)$ lines induced by pairs of segment endpoints of $\calS$.
Consider the intersections of these lines with $T$ (if any).
Partition $T$ into $\calO(\ell)$ sub-segments $T_1,\ldots,T_\ell$, 
each intersecting $\calO(n^2/\ell)$ of the above lines.
For each piece $T_i$, take the arrangement $\calA_i$ of size
$\calO\big((n^2/\ell+n)^2\big)$ 
induced by those lines intersecting $T_i$,
and all $\calO(n)$ lines through one endpoint of $T_i$ and one of some $S\in\calS$,
and all $\calO(n)$ lines supporting segments from $\calS$.
By Lemma~\ref{l:TimeVersusSpace}, within each cell $C$ of $\calA_i$,
the weak visibility of $T_i$ is constant (either \texttt{yes}
or \texttt{no}) and can be stored with $C$: 
Doing so for each $\calA_i$ ($1\leq i\leq \ell\leq n$)
and each of the $\calO(n^4/\ell^2+n^3/\ell+n^2)$ cells $C$ of $\calA_i$
uses memory of order $\calO(n^4/\ell+n^3+n^2\ell)=\calO(n^4/\ell)$ 
as claimed; and corresponding time according to Lemma~\ref{l:VSP}c).

Then, given a query point $\vec p\in\IR^2$,
locating $\vec p$ in each arrangement $\calA_i$
takes total time $\calO(\ell\cdot\log n)$;
and yields the answer to whether $T_i$ is 
weakly visible from $\vec p$ or not.
Now $T$ itself is of course visible iff some $T_i$ is:
a disjunction computable in another $\calO(\ell)$ steps.
\qed\end{proof}
We even can combine Lemma~\ref{l:VSPsize}d) with
Theorem~\ref{t:TimeVersusSpace} to obtain 

\begin{scholium}
\label{s:TimeVersusSpace}
For each $1\leq\ell\leq n$, 
Problem~\ref{p:Visibility} can be solved,
after preprocessing $\calS$ 
in $\calO(n\cdot\log n+m^2\cdot\log^2n/\ell)$ time
into an $\calO(m^2/\ell)$ size data structure,
within query time $\calO(\ell\cdot\log n)$
where $m$ denotes the size (number of edges)
of the Visibility Graph of $\calS$.
\end{scholium}
\begin{proof}
Instead of considering, and partitioning into $\ell$ groups, 
all $\calO(n^2)$ lines induced by pairs of segment endpoints,
do so only for the $\calO(m)$ lines induced by pairs
segment endpoints visible to each other.
\qed\end{proof}

\section{Approximate Visibility Counting} \label{s:Approximate}
Lacking deterministic exact algorithms 
for calculating visibility counts satisfying both
time and space requirements, we now
resort to approximations: 
of $\Vis(\calS,\vec x,\calS)$ 
up to prescribable absolute error $k\in\IN$
or, equivalently, of the visibility \emph{ratio}
$\Vis(\calS,\vec x,\calS)/\Card(\calS)$ up to
absolute error $\epsilon=k/\Card(\calS)$;
recall Hypothesis~\ref{h:Count}.

\begin{remark}
Relative errors make no sense as there is always a
viewpoint $\vec x$ with $\Vis(\calS,\vec x,\calS)=1$.
\end{remark}
Corollary~\ref{c:Tradeoff}, the main result of this section,
presents a randomized approximation 
within sublinear time using almost cubic space in the worst-case
and almost linear space in the `typical' one.

\subsection{Deterministic Approach: Relaxed VSPs} \label{s:Coarse}
Visibility space partitions, and the algorithms based upon them,
are so memory expensive because they discriminate
(i.e. introduces separate arrangement cells for) 
observer positions 
whose visibility differs by as little as one;
recall Definition~\ref{d:VSP}.
It seems that considerably more (time and) space efficient 
algorithms may be feasible by partitioning 
observer space into (or merely covering it by) more coarse classes:

\begin{definition} \label{d:Coarse}
Fix $k\in\IN$ and collections $\calS$ and $\calT$
of non-intersecting segments in the plane.
Some covering $\{C_1,\ldots,C_I\}$ of $\IR^2$ is
called a \textsf{$k$-relaxed VSP of $(\calS,\calT)$} if
\[ \forall 1\leq i\leq I \;
\forall \vec p,\vec q\in C_i: \quad
 \Vis(\calS,\vec p,\calT)-\Vis(\calS,\vec q,\calT)\;\leq\;k \enspace . \]
In the sequel we shall restrict to 
$k$-relaxed VSPs which constitute planar subdivisions
(i.e. each $C_i$ being a simple polygon);
and refer to their \textsf{size} in the sense of Definition~\ref{d:VSP}.
\end{definition}
Indeed, such VSPs 
allow for locating a given observer position $\vec x$ in logarithmic time
to yield a cell $C_i\ni\vec x$ which, during preprocessing,
had been assigned a value $\Vis(\calS,\vec q,\calT)$
approximating $\Vis(\calS,\vec x,\calT)$ up to absolute
error at most $k$.

\begin{example}
For $\Card\calT\leq k$, the trivial planar subdivision $\{\IR^2\}$
is a $k$-relaxed VSP of $(\calS,\calT)$. \\
In particular the quartic lower size bound 
of Lemma~\ref{l:VSPsize}b) applies only 
to 0-relaxed VSPs but breaks down for $k\geq1$.
\end{example}
This example suggests that much smaller 
(e.g. worst-case quadratic) sizes might
become feasible when considering $k$-relaxed VSPs
for, say, $k\approx\sqrt{n}$
or even $k\approx n/\log n$.
Indeed we have the following lower and upper bounds:
\begin{proposition} \label{p:Coarse}
\begin{enumerate}
\item[a)]
For each $n,k$ there exists a non-degenerate family
$\calS$ of segments in the plane such that
any $k$-relaxed VSP has size at least 
$n\cdot\lfloor(n-1)/(k+1)\rfloor$.
\item[b)]
There also exist such families such that
any $k$-relaxed VSP has size at least
$\Omega(n^4/k^4)$.
\item[c)]
Let $\calS$ be a non-degenerate family of $n$ segments in the
plane and $N$ the size of its VSP.
Then there exists a $k$-relaxed VSP of size $\lfloor N/(k+1)\rfloor$.
\item[d)]
There also exists a $k$-relaxed VSP 
of size $\calO(m^2/k^2)$,
where $m$ denotes the size of the Visibility Graph of $\calS$.
\item[e)]
In fixed dimension $d$, (Definition~\ref{d:Coarse} and) 
Item~c) generalizes to $k$-relaxed VSPs
of size $\calO(n^{d^2}\!\!/k^d)$.
\end{enumerate}
\end{proposition}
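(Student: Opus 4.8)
Let me think about this carefully. I need to prove five separate claims (a)–(e) about $k$-relaxed VSPs: two lower bounds (a, b), two upper bounds (c, d), and a higher-dimensional generalization (e).

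Let me start with the upper bounds, which seem more tractable.

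**Upper bounds (c, d).** The natural strategy for part (c) is a greedy merging argument on the exact VSP. The exact VSP has $N$ boundary segments and partitions the plane into cells on which $\Vis(\calS,\cdot,\calT)$ is constant. Since crossing a single boundary segment changes visibility by $\pm 1$ (this is essentially the content of Lemma~\ref{l:TimeVersusSpace} and the output-sensitive analysis of Theorem~\ref{t:OutputSensitive} — each boundary piece is associated with exactly one segment flashing in or out of view), adjacent cells differ in visibility by exactly one. The idea is to coarsen by deleting boundary segments so that any two points in a merged cell differ by at most $k$. Think of the dual adjacency graph of cells weighted by the visibility value; I want to contract edges (i.e. erase boundaries) so that each resulting super-cell has visibility spread $\le k$. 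The cleanest way: group cells into bands where visibility lies in $[jk, (j+1)k)$... but this doesn't immediately respect connectivity or the polygon-boundary requirement. So the real plan for (c) is to argue that one can erase a $1/(k+1)$ fraction of boundary segments. Concretely, I would walk along each boundary segment of the exact VSP; since crossing it changes visibility by $\pm 1$, I can keep only every $(k+1)$-st "level" of boundaries and erase the rest, merging the cells in between. Keeping the boundaries at visibility levels $\equiv 0 \pmod{k+1}$ leaves $\lfloor N/(k+1)\rfloor$ segments, and any merged region spans at most $k$ consecutive levels, hence visibility spread $\le k$. The obstacle is making "every $(k+1)$-st level" well-defined globally when the VSP is not a simple level structure — visibility is not monotone along a generic path. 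I expect this is the crux of (c), and I would resolve it by a charging/amortization argument: assign each of the $N$ boundary segments to a discrete "visibility transition," then show a subset of size $\lfloor N/(k+1)\rfloor$ suffices to keep all spreads bounded by $k$ (deleting the other $k$ out of every $k+1$ transitions along any monotone chain). Part (d) then follows immediately by running the same coarsening on the quadratic-size VSP guaranteed by Lemma~\ref{l:VSPsize}c): with $N = \calO(m^2)$, coarsening by $k$ gives size $\calO(m^2/k)$. To reach the sharper $\calO(m^2/k^2)$ claimed, I would exploit that the $\calO(m)$ relevant lines can be grouped into $\calO(m/k)$ bundles, each bundle contributing a band of width $k$, so the arrangement of bundle-boundaries has only $\calO((m/k)^2) = \calO(m^2/k^2)$ cells; this is the two-dimensional gain of coarsening in both families of lines simultaneously.

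**Lower bounds (a, b).** For (a), I would adapt a "grid of flashes" construction: arrange $n$ segments so that the exact VSP has a product structure where visibility increases by one across each of $\lfloor(n-1)/(k+1)\rfloor$ parallel strips in one direction, with $n$ repetitions in the transverse direction, forcing any $k$-relaxed VSP to still separate strips whose visibility differs by more than $k$. The count $n\cdot\lfloor(n-1)/(k+1)\rfloor$ suggests exactly $n$ transverse copies each contributing $\lfloor(n-1)/(k+1)\rfloor$ forced boundaries. For (b), I would reuse the $\Omega(n^4)$ construction of Lemma~\ref{l:VSPsize}a (O'Rourke's peep-through-gaps example, Figure~\ref{f:ORourke}): there the target flashes in and out across $\Theta(n^4)$ ellipses, each flash being a visibility change of exactly one. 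Replicating the single target into $\Theta(k)$ parallel targets that flash in lockstep amplifies each crossing to a change of $\Theta(k)$, so a $k$-relaxed VSP may merge across single flashes but must still separate clusters of flashes, and a counting argument over the $\Theta(n^4)$ ellipses scaled by the $k$-fold amplification yields $\Omega(n^4/k^4)$ (the fourth power reflecting coarsening simultaneously in all four "directions" of the ellipse grid). The main obstacle here is verifying that the amplified construction remains nondegenerate and that the flashes genuinely force $\Omega(n^4/k^4)$ distinct relaxed cells rather than merging more aggressively than the bound permits.

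**Higher dimensions (e).** Finally, (e) generalizes (c) by the same coarsening argument applied to the $d$-dimensional VSP, whose exact size is $\calO(n)^{d^2}$ by Lemma~\ref{l:VSP}d. The gain from $k$-relaxation in $d$ dimensions is a factor $k^d$ rather than $k$, because one coarsens across each of the $d$ independent hyperplane-crossing directions, each contributing a $k$-fold bundling; hence size $\calO(n^{d^2}/k^d)$. The hard part throughout will be (c): pinning down a clean, globally consistent rule for which boundary segments to erase so that the spread bound holds in every merged cell simultaneously, since visibility is only locally $\pm 1$ and need not be monotone along arbitrary paths through the arrangement.
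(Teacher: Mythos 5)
Your sketches for (a) and (b) are essentially the paper's own approach (a fan-like construction forcing $\lfloor(n-1)/(k+1)\rfloor$ boundaries per segment, and amplification of each visibility flash by replacing targets with $k$ shifted copies so that a $k$-relaxed VSP must retain the full $\Omega(n^4)$ cone arrangement of the $n/k$-segment scene). But part (c) has a genuine gap, and you put your finger on it yourself without resolving it. Keeping the boundaries at visibility levels $\equiv 0 \pmod{k+1}$ does \emph{not} leave $\lfloor N/(k+1)\rfloor$ segments: the $N$ boundary pieces need not be equidistributed over levels (in the worst case every single boundary separates counts $i$ and $i+1$ for one and the same $i$, so a fixed residue class can capture all $N$ of them). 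The missing idea is a pigeonhole over the $k+1$ residue classes: writing $N_i$ for the number of boundary pieces separating a cell of count $i$ from one of count $i+1$, we have $N=\sum_i N_i$, hence some $\kappa\in\{1,\dots,k+1\}$ satisfies $N_\kappa+N_{(k+1)+\kappa}+N_{2(k+1)+\kappa}+\cdots\leq\lfloor N/(k+1)\rfloor$, and one keeps exactly those boundaries. Your second worry --- that visibility is not monotone along paths --- then dissolves with no amortization argument needed: each crossing changes the count by $\pm1$, so any path from a cell of count $c$ to a cell of count exceeding $c+k$ must at some point make a transition at the unique level in $\{c,\dots,c+k\}$ congruent to $\kappa$, and that boundary was kept. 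Without the pigeonhole, however, your claimed size bound is simply false.

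Parts (d) and (e) are also not obtainable the way you propose. The $\calO(m^2/k^2)$ bound is not a coarsening of the VSP at all (the paper explicitly remarks that the cell boundaries used there do \emph{not} belong to the VSP), and ``grouping the $m$ lines into $\calO(m/k)$ bundles, each contributing a band of width $k$'' is not meaningful for arbitrary lines: a bundle of $k$ lines in general position has no band-like structure, and ``the arrangement of bundle boundaries'' is undefined. The tool that makes your two-dimensional-gain intuition rigorous is the Cutting Lemma of Chazelle, Friedman, and Matou\v{s}ek: with $r=m/(k-1)$ the plane can be subdivided into $\calO(r^2)=\calO(m^2/k^2)$ generalized triangles, each of whose interiors is crossed by at most $m/r=k-1$ of the $m$ visibility-relevant lines, so the count varies by at most $k$ inside each triangle. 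Likewise (e) follows not from iterating the coarsening of (c) ``across $d$ independent directions'' (again not a valid notion for an arbitrary hyperplane arrangement) but from simplicial cuttings of size $\calO(r^d)$ applied to the $\calO(n^d)$ hyperplanes spanned by $d$-tuples of simplex vertices, giving $\calO\big((n^d/k)^d\big)=\calO(n^{d^2}/k^d)$. Both are nontrivial theorems, not consequences of the level-coarsening in (c).
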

Recall that $N\leq m^2\leq n^4$, 
thus leaving quadratic gap between a) and d)
for $k$ small; and between b) and d) for $k$ large.
Item~c) succeeds over d) in cases where
$N$ asymptotically does not exceed $m^2/k$.
\begin{figure}[htb]\centerline{%
\includegraphics[width=0.9\textwidth]{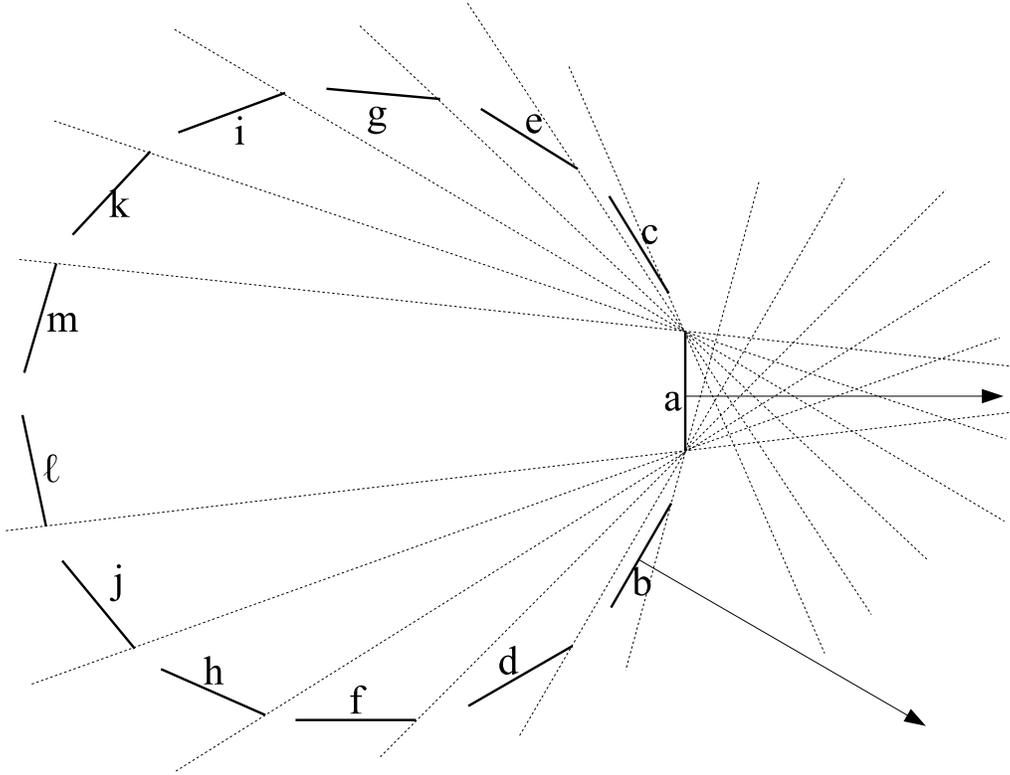}}%
\caption{\label{f:Coarse}Crossing each dashed line 
increases visibility by one segment; hence keeping
only every $k$-th leads to visibility count
variations within a cell of at least $k$.}
\end{figure}
\begin{figure}[htb]\centerline{%
\includegraphics[width=0.99\textwidth]{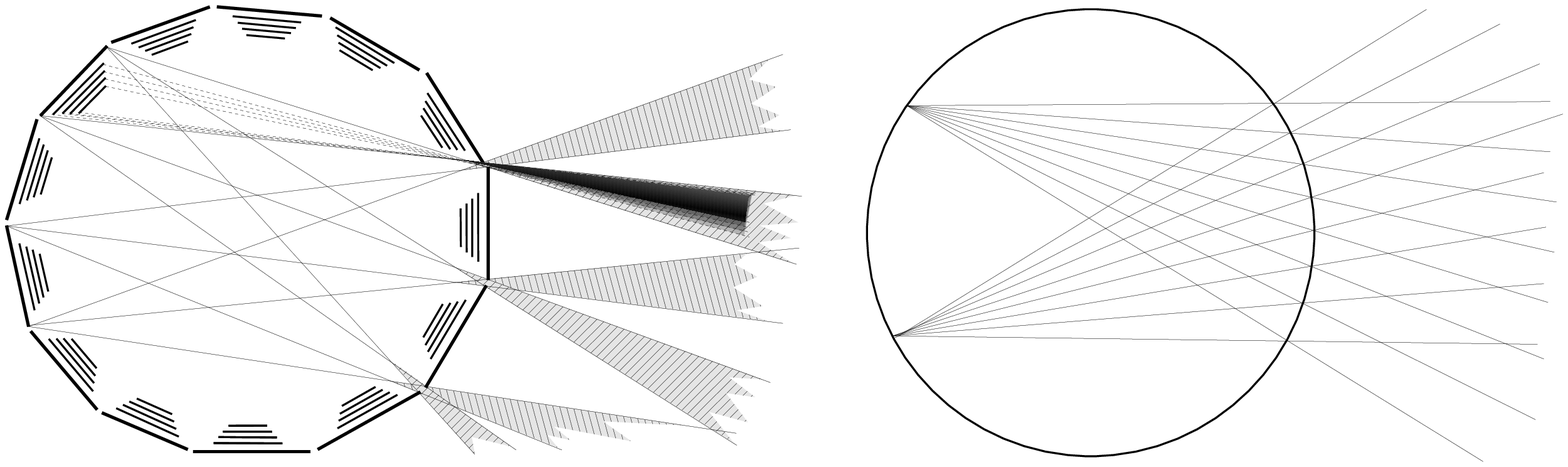}}%
\caption{\label{f:Coarse2}Illustrating the proof
of Proposition~\ref{p:Coarse}b).}
\end{figure}
\begin{proof}
\begin{enumerate}
\item[a)]
Consider Figure~\ref{f:Coarse} with $n=13$ segments
which obviously generalizes to arbitrary $n$.
Moving from segment~$a$ along the arrow, each time
crossing a dashed line amounts to an increase in
visibility from $\{a\}$ via $\{a,b\}$, $\{a,b,c\}$
and so on up to entire $\calS$.
Hence to obtain cells of viewpoints with visibility
varying by at most $k$, we must keep at least
every $(k+1)$-st dashed border, that is 
$\lfloor(n-1)/(k+1)\rfloor$ out of $n-1$.
By symmetry, the same argument applies when moving
from segment~$b$ along the arrow, or from any other
segment.
\item[b)]
A closer look at Figure~\ref{f:Coarse} reveals it
to induce a VSP of size $\Omega(n^4)$: For any segment,
there are linearly many separate cones from which it can
be seen through the gaps between the other segments;
hence we have quadratically many cones, 
of which almost any two intersect;
compare Figure~\ref{f:Coarse2}, particularly its right part.
\\
Now in order to argue about $k$-relaxed VSPs,
replace in Figure~\ref{f:Coarse} each single segment by
$k$ scaled and shifted
copies as indicated in the left of Figure~\ref{f:Coarse2};
that is we now have a scene of size $N=n\cdot k$.
Observe that, when entering and passing through
a cone of visibility, the number of segments visible
increases from its original value to an additional $k$
(drawn in levels of gray). The visibility number thus
varies by $k+1$, requiring a $k$-relaxed VSP to
subdivide the cone; indeed the entire cone!
By the above considerations, these necessary boundaries
induce an arrangement of complexity $\Omega(n^4)$
which, expressed in the size $N$ of the scene, 
is $\Omega(N^4/k^4)$ as claimed.
\item[c)]
Classify the cells of $\VSP(\calS,\calS)$ according
to their visibility count; and let $N_i$ denote 
the number of boundary segments of $\VSP(\calS,\calS)$ 
separating a cell with visibility count $i$ from one with
visibility count $i+1$. Since any boundary segment
does so for some $i=1,\ldots,n-1$,
\begin{eqnarray*} 
N&=&N_1+N_2+\cdots+N_{n-1} \\
&=&\phantom{+}\;(N_1+N_{k+2}+N_{2(k+1)+1}+\ldots)
+(N_2+N_{k+3}+N_{2(k+1)+2}+\ldots)\\
&&
+\;(N_3+N_{k+4}+N_{2(k+1)+4}+\ldots)
+\cdots+
(N_{k+1}+N_{2(k+1)}+N_{3(k+1)}+\ldots) \enspace .
\end{eqnarray*}
Hence by pigeonhole principle there exists
$1\leq\kappa\leq k+1$ such that
$\lfloor N/(k+1)\rfloor\geq
N_{\kappa}+N_{k+1+\kappa}+N_{2(k+1)+\kappa}+\ldots$
Now keep from $\VSP(\calS,\calS)$ exactly those
boundary segments that either separate cells
with visibility count $\kappa$ from ones with
visibility count $\kappa+1$, or cells with
visibility count $k+1+\kappa$ from ones
with visibility count $k+2+\kappa$, or
cells with visibility count $2(k+1)+\kappa$
from visibility count $2(k+1)+\kappa+1$ 
and so on.
\\
By the above considerations, this planar subdivision
has complexity at most $\lfloor N/(k+1)\rfloor$.
And by construction, it joins cells having 
visibility counts $j\cdot(k+1)+\kappa+1$,
$j\cdot(k+1)+\kappa+2$, $j\cdot(k+1)+\kappa+3$, 
\ldots, $j\cdot(k+1)+(k+1)+\kappa$;
but maintains their separation from cells with
visibility count $(j+1)\cdot(k+1)+\kappa+1$,
because those boundary segments
are precisely the ones deliberately kept.
Hence the joined coarsened (i.e. super-) cells indeed contain
only viewpoints with visibility count differing
by at most $k$.
\item[d)]
Recall the collection
$\calL$ of $m$ lines used in the proof of Lemma~\ref{l:VSPsize}c).
Now let $r:=m/(k-1)$, $k>1$, and apply the \textsf{Cutting Lemma} of
Chazelle, Friedman, and Matou\v{s}ek \mycite{Lemma~4.5.3}{Matousek}:
\begin{quote}
There exists a subdivision of the plane into
$\calO(r^2)=\calO(m^2/k^2)$ generalized (i.e. not necessarily closed)
triangles $\Delta_i$ such that the interior of each $\Delta_i$
is intersected by at most $m/r=k-1$ lines from $\calL$.
\end{quote}
Since visibility changes occur only when crossing
lines in $\calL$, the latter means that the visibility
counts within each $\Delta_i$ differ by at most $k$.
\item[e)]
Instead of $\calO(r^2)$ generalized triangles as in d),
now employ simplicial cuttings of size $\calO(r^d)$
according to \mycite{Theorem~6.5.3}{Matousek}.
\qed\end{enumerate}\end{proof}
We remark that Item~c) considers a certain
sub-arrangement of the 0-relaxed VSP,
whereas the planar subdivision due to Item~d)
uses cell boundaries \emph{not} necessarily belonging 
to the VSP. Also, the size $\calO(r^d)$ of the
cuttings employed in Items~d) and e) is known to
be optimal in general; but this optimality does
not necessarily carry over to our application
in visibility, recall the gaps between lower
and upper bounds in Items~a) to d).

Substituting $k=\delta\cdot n$ yields 

\begin{corollary} \label{c:Coarse}
For each collection $\calS$ of non-degenerate segments in the plane
and $k\in\IN$, there exists a data structure of size 
$\calO\big(\min\{N/(\delta\cdot n),m^2/(\delta\cdot n)^2\}\big)$ 
that allows to approximate, 
given $\vec x\in\IR^2$,
the visibility ratio $\Vis(\calS,\vec x,\calS)/\Card(\calS)$ 
up to absolute error $\delta>0$
in time $\calO(\log n)$.\\
\end{corollary}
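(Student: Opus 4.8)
The plan is to read off this statement directly from Proposition~\ref{p:Coarse} by the indicated substitution, combining the point-location remark that follows Definition~\ref{d:Coarse}. First I would set $k := \lceil \delta\cdot n\rceil$, so that approximating the visibility \emph{count} $\Vis(\calS,\vec x,\calS)$ up to absolute error $k$ is the same, after dividing by $\Card(\calS)=n$, as approximating the visibility \emph{ratio} $\Vis(\calS,\vec x,\calS)/\Card(\calS)$ up to absolute error $k/n \leq \delta + 1/n$; in the asymptotic regime this is the claimed error $\delta>0$ (one should state $\delta$ as a fixed constant, so that the additive $1/n$ is absorbed, or simply quote arbitrary constant absolute error as the abstract does). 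This reduces everything to producing a small $k$-relaxed VSP.

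Next I would invoke Proposition~\ref{p:Coarse}c) and~d), which guarantee the \emph{existence} of a $k$-relaxed VSP of size $\lfloor N/(k+1)\rfloor$ and of one of size $\calO(m^2/k^2)$ respectively. Taking whichever of the two is smaller yields a planar subdivision of size
\[
 \calO\big(\min\{N/(k+1),\,m^2/k^2\}\big)
 \;=\;\calO\big(\min\{N/(\delta n),\,m^2/(\delta n)^2\}\big),
\]
upon substituting $k=\lceil\delta n\rceil$. Both constructions produce honest planar subdivisions into simple cells (Item~c) keeps a sub-collection of the VSP boundary segments; Item~d) uses a Cutting-Lemma triangulation), so each cell $C_i$ can be tagged during preprocessing with the visibility count $\Vis(\calS,\vec q,\calT)$ of one representative $\vec q\in C_i$.

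Finally I would recall the paragraph immediately after Definition~\ref{d:Coarse}: a $k$-relaxed VSP that is a planar subdivision of size $M$ admits, via the standard point-location data structure of \mycite{Theorem~6.8}{deBerg}, a query in time $\calO(\log M)=\calO(\log n)$ that returns the cell $C_i\ni\vec x$ together with its stored representative value, which by Definition~\ref{d:Coarse} approximates $\Vis(\calS,\vec x,\calT)$ up to absolute error $k$, hence the ratio up to $\delta$. I do not expect a genuine obstacle here, since Corollary~\ref{c:Coarse} is a pure restatement of Proposition~\ref{p:Coarse}; the only point requiring a little care is bookkeeping the constants in the error, namely that the floor/ceiling in $k=\lceil\delta n\rceil$ and the resulting $k/n$-versus-$\delta$ discrepancy are harmless for fixed $\delta$, and that the logarithm of the (at most polynomial) subdivision size is $\calO(\log n)$ as claimed.
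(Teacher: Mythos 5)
Your proposal is correct and takes essentially the same approach as the paper: the paper's entire argument is the one-line remark ``Substituting $k=\delta\cdot n$ yields'' applied to Proposition~\ref{p:Coarse}c)+d), combined with the logarithmic-time point-location observation stated right after Definition~\ref{d:Coarse}, exactly as you do. Your explicit handling of the integrality of $k$ (ceiling versus $\delta n$, and the resulting $1/n$ slack in the error) is a harmless refinement that the paper simply glosses over.
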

Notice that we only claim the \emph{existence} of
such small data structures. In order to \emph{construct}
them, the proofs of Proposition~\ref{p:Coarse}c) and d)
both proceed by first calculating the 0-relaxed VSP
and then coarsening it.
Specifically for Proposition~\ref{p:Coarse}d),
a \textsf{Triangular Cutting} 
can be obtained in time $\calO(m\cdot r\cdot\polylog)$ \cite{Agarwal};
but determining the visibility count for each triangle $\Delta_i$
costs $\calO(n\cdot\log n)$ according to Lemma~\ref{l:RotSweep};
or can be taken from the 0-relaxed VSP.
The preprocessing time for Proposition~\ref{p:Coarse}c+d) 
thus is, up to polylogarithmic factors, 
that of Theorem~\ref{t:OutputSensitive},
i.e. roughly $\calO(N)$:
independent of, and not taking advantage of large values of, $k$.
For the configuration from Lemma~\ref{l:VSPsize}a) for instance
(recall Figure~\ref{f:ORourke}), this results in a preprocessing time
of order $n^4$ although the resulting $1$-relaxed VSP has only size $\calO(n^2)$.
Alternatively, apply Lemma~\ref{l:RotSweep} to (one point from)
each triangle $\Delta_i$ to obtain a running time of roughly output$\times n$, 
that is still off optimal by one order of magnitude.
And finally, the asymptotically `small' size and time
for calculating triangular cuttings hide in the
big-Oh notation some large constants
which are believed to prevent practical applicability.

\subsection{Random Sampling} \label{s:Chernoff}
Both size and query time of the data structure due to Corollary~\ref{c:Coarse}
are rather low; but because of the infavourable
preprocessing time and hidden big-Oh overhead
indicated above, 
we now proceed to random sampling,
based on a rather simple generic algorithm:

\begin{algorithm} \label{a:Sampling} ~
\begin{enumerate}
\item[i)] Guess a sample target $\calT\subseteq\calS$
of size $m$.
\item[ii)]
Calculate the count $\Vis(\calS,\vec x,\calT)$
of objects in $\calT$ visible through $\calS$.
\item[iii)]
Return the ratio $\Vis(\calS,\vec x,\calT)/\Card(\calT)$;
\item[iv)]
and hope that it does not deviate too much
from the `true' value $\Vis(\calS,\vec x,\calS)/\Card(\calS)$.
\end{enumerate}
\end{algorithm}
Item~iv) is justified by the following

\begin{lemma} \label{l:Chernoff}
Fix $\vec x\in\IR^d$ and $\delta>0$,
then choose $\calT\subseteq\calS$
as $m$ independent identically distributed 
random draws from $\calS$. It holds
\[ \Prob_{\calT}\Big[\big|\Vis(\calS,\vec x,\calT)/m\,-\,
\Vis(\calS,\vec x,\calS)/n\big|\:\geq\:\delta\Big]
\;\leq\; 2\cdot e^{-2m\cdot\delta^2} \]
\end{lemma}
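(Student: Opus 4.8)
The plan is to recognize the bound as a standard two-sided Hoeffding/Chernoff estimate. First I would fix the observer position $\vec x$ once and for all and make the crucial structural observation that justifies independence: whether a given segment $S\in\calS$ is weakly visible from $\vec x$ is decided entirely by $S$, $\vec x$, and the \emph{full} occluder scene $\calS$ (Definition~\ref{d:Count}), and in particular does \emph{not} depend on which other segments happen to land in the random sample $\calT$. Hence I would let $V:=\VIS(\calS,\vec x,\calS)$ be the (deterministic) set of visible segments and put $p:=\Vis(\calS,\vec x,\calS)/n=\Card(V)/n$.

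Next I would unfold the sampling process. Writing $T_1,\ldots,T_m$ for the $m$ independent uniform draws from $\calS$ that constitute $\calT$, I set $X_j:=\mathbf{1}[T_j\in V]\in\{0,1\}$ (counting each draw with multiplicity, as in Algorithm~\ref{a:Sampling}). By the observation above the $X_j$ are independent, and each is Bernoulli distributed with $\Prob[X_j=1]=p$, since a uniform draw lands in $V$ with probability $\Card(V)/n$. Consequently $\Vis(\calS,\vec x,\calT)=\sum_{j=1}^m X_j$ with expectation $m\,p$, so that $\Vis(\calS,\vec x,\calT)/m$ is precisely the empirical mean of $m$ i.i.d.\ $[0,1]$-valued variables whose common expectation equals the target ratio $\Vis(\calS,\vec x,\calS)/n$.

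Finally I would invoke Hoeffding's inequality for independent variables confined to the unit interval, for which $\sum_{j}(b_j-a_j)^2=m$ gives the one-sided tail $\exp(-2m\delta^2)$; a union bound over the upper and lower deviations supplies the factor $2$ and yields the claimed $2\,e^{-2m\delta^2}$. The only point requiring genuine care --- and the one I would state explicitly rather than let slip --- is the independence step: it is not a formal triviality but rests on the fact that occlusion is evaluated through $\calS$ rather than through the sample, so adding or removing sampled segments never alters another segment's visibility status. Everything after that is the textbook concentration estimate and involves no geometry.
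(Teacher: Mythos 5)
Your proposal is correct and follows essentially the same route as the paper: define indicator variables $X_j$ for each draw being visible through the full scene $\calS$, observe they are i.i.d.\ Bernoulli with mean $\Vis(\calS,\vec x,\calS)/n$, and apply the two-sided Chernoff--Hoeffding bound. Your explicit justification of independence (visibility is evaluated through $\calS$, never through the sample) and of the multiplicity convention for repeated draws makes precise what the paper leaves implicit, but it is the same argument.
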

In other words: In Algorithm~\ref{a:Sampling}
taking $m$ (quadratic in the aimed \emph{absolute} accuracy
$\delta$ but) \emph{constant} with respect to the
scene size $n$ suffices to achieve the desired 
approximation with constant probability; slightly
increasing it further 
amplifies exponentially the chance for success.

\begin{remark}
\begin{enumerate}
\item[a)]
It is easy to see that a fixed \emph{relative} accuracy
can be attained, for $\Vis(\calS,\vec x,\calS)/n\to0$,
only by samples of size $m\to n$: If only one segment
is visible, it must get sampled to be detected.
\item[b)]
Also the visibility of the sample is crucially to
be considered with respect to the \emph{entire} scene,
i.e. $\Vis(\calS,\vec x,\calT)$ rather than
$\Vis(\calT,\vec x,\calT)$.
\end{enumerate}
\end{remark}

\begin{proof}[Lemma~\ref{l:Chernoff}]
It is well-known \cite{Motwani,AlonSpencer}
that a sum $X:=\sum_{i=1}^m X_i$
of independent $\{0,1\}$ trials $X_1,\ldots,X_m$
satisfies the \textsf{Chernoff--Hoeffding Bound}
$\Prob\big[|X/m-\mu|\geq\delta\big]\leq2\cdot\exp(-2m\cdot\delta^2)$
where $\mu$ denotes the expectation of $X_i$.
In our case, let $X_i$ denote the event that
the $i$-th draw $S_i\in\calS$ 
is visible from $\vec x$ through $\calS$.
This happens with probability $\mu=\Vis(\calS,\vec x,\calS)/n$,
hence $X=\Vis(\calS,\vec x,\calT)$.
\qed\end{proof}
%

\subsection{The VC-Dimension of Visibility} \label{s:VCdim}
Note that the random experiment $\calT$ 
and the probability analysis of its properties
in Lemma~\ref{l:Chernoff} 
holds for each $\vec x$ but not
uniformly in $\vec x$.
This means for our purpose to re-sample $\calT\subseteq\calS$ at every frame.
On the other hand, 
the above considerations have not exploited any geometry.
An important connection between combinatorial 
sampling and geometric properties is captured by the
\textsf{Vapnik--Chervonenkis Dimension} \cite{AlonSpencer}:

\begin{fact} \label{f:VCdim}
Let $X$ be a set and $\calR$ a collection of subsets $R\subseteq X$.
Denote by 
\begin{equation} \label{e:VCdim}
d\;:=\;\VCdim(X,\calR) \;\;:=\;\;
\max\big\{\Card Y\big| Y\subseteq X,
\{Y\cap R:R\in\calR\}=2^Y\big\}
\end{equation}
the VC-Dimension of $(X,\calR)$.
\begin{enumerate}
\item[a)]
For $Y\subseteq X$, $n=\Card(Y)$,
~ $\Card\{Y\cap R:R\in\calR\}\;\leq\;\sum_{i=0}^d\binom{n}{d}
\;\leq\; n^d$.
\item[b)]
Let $Y\subseteq X$ be random of 
$\Card(Y)\geq\max\big\{4/\delta\cdot\log\tfrac{2}{p},8d/\delta\cdot\log\tfrac{8d}{\delta}\big\}$.
Then with probability at least $1-p$, it holds
for each $R\in\calR$: ~
$\Card(X\cap R)\geq\delta\cdot\Card(X) \;\Rightarrow\; Y\cap R\not=\emptyset$.
\item[c)]
Let $Y\subseteq X$ be random of 
$\Card(Y)\geq\Omega\big((d\cdot\log\tfrac{d}{\delta}+\log\tfrac{1}{p})/\delta^2\big)$.
Then with probability at least $1-p$ it holds
for each $R\in\calR$: ~
$\big|\Card(X\cap R)/\Card(X)\,-\,\Card(Y\cap R)/\Card(Y)\big| 
\;\leq\;\delta$.
\end{enumerate}
\end{fact}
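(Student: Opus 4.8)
The plan is to prove the three parts in order, with part~a) supplying the combinatorial engine on which both b) and c) depend. For a), the Sauer--Shelah Lemma, I would induct on $n=\Card(Y)$. Writing $\Phi_d(n):=\sum_{i=0}^{d}\binom{n}{i}$ and letting $\mathcal{F}:=\{Y\cap R:R\in\calR\}$ be the family of traces on $Y$, the goal is $\Card\mathcal{F}\leq\Phi_d(n)$. Fix $x\in Y$ and split $\mathcal{F}$ by its treatment of $x$: let $\mathcal{F}_1:=\{F\setminus\{x\}:F\in\mathcal{F}\}$ be the projection onto $Y\setminus\{x\}$, and let $\mathcal{F}_2:=\{F\setminus\{x\}:F\in\mathcal{F},\ x\in F,\ F\setminus\{x\}\in\mathcal{F}\}$ collect those traces realized both with and without $x$. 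Counting fibers of the projection gives $\Card\mathcal{F}=\Card\mathcal{F}_1+\Card\mathcal{F}_2$. Now $\mathcal{F}_1$ is the trace family on the $(n-1)$-element set $Y\setminus\{x\}$ and has VC-dimension at most $d$, so induction yields $\Card\mathcal{F}_1\leq\Phi_d(n-1)$; and any set shattered by $\mathcal{F}_2$ becomes, after re-adjoining $x$, a set shattered by $\mathcal{F}$, so $\mathcal{F}_2$ has VC-dimension at most $d-1$ and $\Card\mathcal{F}_2\leq\Phi_{d-1}(n-1)$. Pascal's identity $\Phi_d(n-1)+\Phi_{d-1}(n-1)=\Phi_d(n)$ closes the induction, and $\Phi_d(n)\leq n^d$ is elementary.

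For b), the $\epsilon$-net statement, I would run the classical \emph{double-sampling} (symmetrization) argument, feeding the polynomial trace bound of a) into a union bound. Let $A$ be the bad event that the random $Y$ misses some \emph{heavy} range, i.e. some $R\in\calR$ with $\Card(X\cap R)\geq\delta\cdot\Card(X)$ has $Y\cap R=\emptyset$. Draw an independent ghost sample $Y'$ of the same size and let $B$ be the event that some heavy range is missed by $Y$ yet captures at least $\tfrac12\delta\Card(Y')$ points of $Y'$. Since the number of $Y'$-hits of a fixed heavy range is a binomial with mean $\geq\delta\Card(Y')$, a lower-tail estimate gives $\Prob[B]\geq\tfrac12\Prob[A]$. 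The crux is then to bound $\Prob[B]$: conditioning on the unordered union $Y\cup Y'$ converts the experiment into a uniformly random bipartition of a fixed $2\Card(Y)$-point set, and for any \emph{combinatorial} range the event that all of its $\geq\tfrac12\delta\Card(Y')$ marked points land in the $Y'$ half has conditional probability at most $2^{-\delta\Card(Y')/2}$. By a) there are at most $(2\Card(Y))^d$ distinct traces on $Y\cup Y'$, so a union bound gives $\Prob[B]\leq(2\Card(Y))^d\cdot 2^{-\delta\Card(Y)/2}$; the stated size condition drives this below $p$, whence $\Prob[A]\leq p$.

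Part~c), the $\epsilon$-approximation statement, runs on the same engine but must control the \emph{two-sided} empirical deviation simultaneously over \emph{all} ranges, which is where the genuine difficulty sits. I again symmetrize with a ghost sample, reducing the uniform deviation $\bigl|\Card(X\cap R)/\Card(X)-\Card(Y\cap R)/\Card(Y)\bigr|$ over $R\in\calR$ to a deviation between the two samples on the fixed union $Y\cup Y'$; the per-range estimate is now a Hoeffding bound on the random bipartition rather than the simple miss-probability of b), after which a) again caps the number of genuinely distinct ranges by $(2\Card(Y))^d$ and a union bound finishes. The sample-size condition $\Card(Y)=\Omega\bigl((d\log\tfrac{d}{\delta}+\log\tfrac1p)/\delta^2\bigr)$ is exactly what balances this polynomial growth factor against the exponential concentration. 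The main obstacle throughout b) and c) is precisely this symmetrization: the ``true'' frequency $\Card(X\cap R)/\Card(X)$ ranges over an \emph{infinite} family $\calR$ and cannot be union-bounded directly, and it is the ghost-sample trick---replacing $X$ by the finite witness $Y\cup Y'$ so that a) becomes applicable---that renders the infinite family tractable. As these are standard results \cite{AlonSpencer}, I would cite them rather than reproduce the full calculations.
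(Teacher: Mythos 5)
The paper offers no proof of Fact~\ref{f:VCdim} at all: it is quoted as a known result with a pointer to \cite{AlonSpencer}, which is exactly what you propose to do in your closing sentence. Your sketches are correct and are precisely the standard arguments in that cited source (Sauer--Shelah induction with the trace decomposition and Pascal's identity for a), double-sampling/symmetrization plus the trace bound and a union bound for the $\epsilon$-net statement b) and the $\epsilon$-approximation statement c)), so the approaches coincide; incidentally, you also silently correct the paper's typo in a), where $\sum_{i=0}^d\binom{n}{d}$ should read $\sum_{i=0}^d\binom{n}{i}$.
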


\begin{lemma} \label{l:VCdim}
Fix a collection $\calS$ of $n$ non-crossing
$(d-1)$--simplices in $\IR^d$. 
\begin{enumerate}
\item[a)]
Define $X:=\calS$ and $\calR:=\{\VIS(\calS,\vec x,\calS):\vec x\in\IR^d\}$.
Then $\VCdim(X,\calR)\leq d^2\cdot\big(\log n+\calO(1)\big)$.
\item[b)]
A random subset $\calT\subseteq\calS$ of cardinality
$m\geq\Omega\big(d^2\cdot\log(n)\cdot\log(d\cdot\log n/\delta)/\delta\big)$
satisfies with constant (and easily amplifiable) probability
that, whenever at least a $\delta$-fraction
of the simplices of $\calS$ are visible from $\vec x\in\IR^d$
through $\calS$, then so is some simplex of $\calT$.
\item[c)]
A random subset $\calT\subseteq\calS$ of cardinality
$m\geq\Omega\big(d^2\cdot\log(n)\cdot\log(d\cdot\log n/\delta)/\delta^2\big)$
satisfies with constant (yet easily amplifiable) probability
that $\Vis(\calS,\vec x,\calT)/\Card(\calT)$ deviates from
$\Vis(\calS,\vec x,\calS)/\Card(\calS)$ absolutely
by no more than $\delta$.
\item[d)]
The bound obtained in a) is asymptotically 
optimal with respect to $n$: In $\IR^2$
there exist non-degenerate collections $\calS=X$ of $n$ 
line segments such that $\VCdim(X,\calR)\geq\log n-\calO(\loglog n)$.
\end{enumerate}
\end{lemma}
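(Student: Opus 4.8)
The plan is to derive parts a)--c) from the cell-count bound of Lemma~\ref{l:VSP}d) together with the abstract sampling estimates of Fact~\ref{f:VCdim}, and to establish the matching lower bound d) by an explicit ``binary counter'' construction. For a), I would first invoke Lemma~\ref{l:VSP}d): the viewpoint space $\IR^d$ decomposes into at most $\calO(n)^{d^2}$ cells, on each of which $\VIS(\calS,\cdot,\calS)$ is constant; hence the range collection $\calR$ contains at most $\calO(n)^{d^2}$ distinct sets. Now if $Y\subseteq X$ is shattered, then by definition \eqref{e:VCdim} the trace $\{Y\cap R:R\in\calR\}$ is all of $2^Y$, so $2^{\Card Y}\leq\Card\calR\leq\calO(n)^{d^2}$. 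Taking logarithms gives $\Card Y\leq d^2\cdot(\log n+\calO(1))$, and since this holds for every shattered $Y$ we obtain the asserted bound on $\VCdim(X,\calR)$.

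Parts b) and c) then follow by substituting the VC-dimension bound $d_{\mathrm{VC}}:=\VCdim(X,\calR)=\calO(d^2\log n)$ from a) into Fact~\ref{f:VCdim}b) and c), respectively (with constant failure probability $p$), and absorbing $\log(d^2\log n/\delta)$ into $\calO(\log(d\log n/\delta))$. Here I would record the identification $Y\cap R=\calT\cap\VIS(\calS,\vec x,\calS)=\VIS(\calS,\vec x,\calT)$ --- a sampled simplex lies in the trace iff it is weakly visible \emph{through the entire scene} $\calS$, cf.\ Remark b) --- so that the net guarantee of Fact~b) becomes exactly the ``some simplex of $\calT$ is visible'' conclusion, and the approximation guarantee of Fact~c) becomes the claimed deviation bound on the visibility ratios. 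Crucially, unlike Lemma~\ref{l:Chernoff}, these guarantees hold \emph{uniformly} over all $\vec x$, which is what makes a single fixed sample $\calT$ reusable across all queries.

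The genuine work is part d), the lower bound. The idea is to exhibit a scene on $\calO(n)$ segments that shatters a set $Y$ of $k=\log n-\calO(\loglog n)$ targets. I would let the observer range over $N=2^k$ discrete positions $t\in\{0,\dots,N-1\}$ along a line and place $k$ pairwise-separated targets $T_1,\dots,T_k$ (in distinct angular sectors) above it; between each $T_i$ and the line I install a ``picket fence'' of occluders tuned so that $T_i$ is weakly visible from position $t$ exactly when the $i$-th bit of $t$ equals $1$. Since bit $i$ has period $2^i$, its fence needs $\calO(N/2^{i-1})$ pickets, for a geometric total of $\calO(N)=\calO(n)$ occluders. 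As the bits of $t$ are independent while $t$ sweeps $\{0,\dots,2^k-1\}$, every subset of $\{T_1,\dots,T_k\}$ arises as the set of visible targets for some viewpoint, so $Y=\{T_1,\dots,T_k\}$ is shattered and $\VCdim\geq k$.

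The main obstacle is precisely this construction: a straight picket fence viewed from a \emph{moving} point produces visibility windows whose widths are distorted by perspective rather than exactly periodic, so realizing a clean bitwise toggle --- and keeping the fences of different targets from occluding one another --- requires care in placing the pickets and the sampling positions $t$. I expect this slack, together with the non-degeneracy perturbation demanded by Definition~\ref{d:Nondegenerate}, to account for the additive $\calO(\loglog n)$ loss, leaving $k\geq\log n-\calO(\loglog n)$ and hence matching the upper bound of a) up to lower-order terms.
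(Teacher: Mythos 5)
Your parts a)--c) are exactly the paper's argument: bound $\Card\calR\leq\calO(n)^{d^2}$ via Lemma~\ref{l:VSP}d), conclude $2^{\Card Y}\leq\Card\calR$ for any shattered $Y$, and plug the resulting VC-dimension into Fact~\ref{f:VCdim}b+c); your explicit identification $\calT\cap\VIS(\calS,\vec x,\calS)=\VIS(\calS,\vec x,\calT)$ and the remark about uniformity over $\vec x$ are details the paper leaves implicit, and they are correct. Part d) is where you genuinely diverge. The paper generalizes Figure~\ref{fig:VCdim}: it keeps $k$ target segments and, for \emph{each} of the $2^k$ subsets, designates a separate viewpoint together with dedicated ``shield'' segments occluding precisely the targets outside that subset from that viewpoint; this costs $2^k\cdot k/2$ shields, so $n=k+2^k\cdot k/2$ and $k=\log n-\calO(\loglog n)$. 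You instead share occluders across viewpoints: one picket fence per target, encoding bit $i$ of the observer's position along a line, for a total of only $\calO(2^k)$ occluders. Each approach buys something. The paper's construction needs no interference analysis at all --- every shield serves exactly one (viewpoint, target) pair, and the figure certifies the local pattern --- while its factor-$k$ overhead in scene size is harmless, being absorbed into the $\loglog$ term. Your binary-counter fence is more economical (it would in fact give $k\geq\log n-\calO(1)$, slightly sharper than the claimed bound), but it carries the proof obligations you flag and do not discharge: perspective distortion, curable by pushing the targets far enough away that sight rays to a fixed target are essentially parallel, so each fence pattern is an exact translate of the bit pattern; and mutual fence interference, curable by spreading the target directions and fence heights so that, on each fence's supporting line, the interval swept by rays to other targets is disjoint from the fence itself. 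These verifications are routine but necessary; as written your d) is a construction sketch at roughly the same level of rigor as the paper's own ``obviously extends,'' resting on a somewhat more delicate geometric configuration.
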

\begin{proof}
\begin{enumerate}
\item[a)]
Lemma~\ref{l:VSP}d) implies $\Card\calR\leq\calO(n)^{d^2}$.
Hence, in Equation~(\ref{e:VCdim}),
$2^Y=\{Y\cap R:R\in\calR\}$ requires $2^{\Card Y}\leq\Card\calR$
and therefore $\Card Y\leq d^2\cdot\big(\log n+\calO(1)\big)$.
\item[b)] and c) follow by plugging Item~a) into Fact~\ref{f:VCdim}b+c).
\item[d)]
Figure~\ref{fig:VCdim} below obviously extends to
the construction of $k$ segments of which,
using $2^k\cdot k/2$ additional segments as `shields',
each subset appears as a visible set
$\VIS(\calS,\vec x,\calS)$ for some $\vec x$.
That is a scene $\calS$ of size $n=k+2^k\cdot k/2$
containing a subset $Y$ of size $k=\log n-\calO(\loglog n)$
as in Equation~(\ref{e:VCdim}).
\qed\end{enumerate}\end{proof}

\begin{figure}[htb]\centerline{%
\includegraphics[width=0.8\textwidth,height=0.25\textheight]{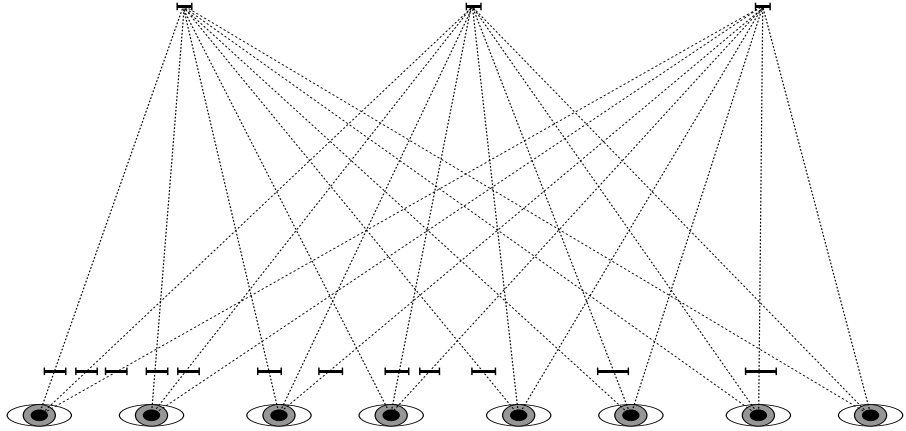}}
\caption{\label{fig:VCdim}3 segments of which,
shielded by 12 further segments, each
of its 8 subsets appears as visible set.}
\end{figure}
\subsection{Main Result} \label{s:MainTheory}
Lemma~\ref{l:VCdim}c) enhances Lemma~\ref{l:Chernoff}:
The latter is concerned with the probability of a \emph{constant}-size
sample $\calT$
to be representative (i.e. to approximate the visibility
ratio) with respect to a \emph{fixed} viewpoint---i.e. our
application would (have to) re-sample in each frame!
The former lemma on the other hand
asserts that a \emph{polylogarithmic}-size sample,
drawn once and for all, be suitable with respect to
\emph{all} viewpoints!
In particular we may preprocess the visibility
of each $T\in\calT$ separately according to 
Theorem~\ref{t:TimeVersusSpace} and obtain,
employing Scholium~\ref{s:TimeVersusSpace}:

\begin{corollary} \label{c:Tradeoff}
Given $0<\delta<1$, a collection $\calS$ of $n$ non-crossing
segments in the plane ($d=2$), and $1\leq\ell\leq n\leq m$
where $m$ denotes the size of the Visibility Graph of $\calS$.
Then a randomized algorithm can preprocess $\calS$ within time 
$\calO\big(n\cdot\log n+m^2\cdot\polylog n\cdot\log\tfrac{1}{\delta}/(\ell\cdot\delta^2)\big)$ 
and space 
$\calO\big(m^2\cdot\polylog n\cdot\log\tfrac{1}{\delta}/(\ell\cdot\delta^2)\big)$ 
into a data structure having with high probability the following property:
Given $\vec x\in\IR^2$, one can approximate 
the visibility ratio $\Vis(\calS,\vec x,\calS)/\Card(\calS)$ 
up to absolute error at most $\delta$
in time $\calO\big(\ell\cdot\polylog n\cdot\log\tfrac{1}{\delta}/\delta^2\big)$.
\end{corollary}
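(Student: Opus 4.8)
The plan is to assemble the two ingredients already prepared, namely the uniform sampling guarantee of Lemma~\ref{l:VCdim}c) (specialised to $d=2$) and the single-target time--space tradeoff of Scholium~\ref{s:TimeVersusSpace}. First I would draw, once and for all during preprocessing, a random subset $\calT\subseteq\calS$ of cardinality $s=\Theta\big(\polylog n\cdot\log\tfrac1\delta/\delta^2\big)$ as prescribed by Lemma~\ref{l:VCdim}c). The decisive point---stressed in the discussion opening Section~\ref{s:MainTheory}---is that, since the relevant range space has VC-dimension only $\calO(\log n)$ (Lemma~\ref{l:VCdim}a), this \emph{single} draw suffices \emph{simultaneously for all} viewpoints: with constant, and via the additive $\log\tfrac1p$ term of Fact~\ref{f:VCdim}c) easily amplifiable to high, probability the empirical ratio $\Vis(\calS,\vec x,\calT)/\Card(\calT)$ stays within absolute error $\delta$ of $\Vis(\calS,\vec x,\calS)/\Card(\calS)$ for every $\vec x\in\IR^2$ at once.

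Next I would preprocess, for each sampled segment $T\in\calT$ separately, the data structure of Scholium~\ref{s:TimeVersusSpace} with the given parameter $\ell$. Each such structure decides the weak visibility of its $T$ through the \emph{entire} scene $\calS$---not merely through $\calT$, the distinction emphasised in the remark following Lemma~\ref{l:Chernoff}---in query time $\calO(\ell\cdot\log n)$, occupying space $\calO(m^2/\ell)$ after preprocessing time $\calO\big(n\cdot\log n+m^2\cdot\log^2 n/\ell\big)$. Here the $\calO(n\cdot\log n)$ Visibility-Graph computation together with its $\calO(m)$ mutually-visible endpoint lines is carried out a single time and shared, whereas only the per-target arrangement-and-point-location structures are replicated $s$ times. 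Summing therefore gives total space $\calO\big(m^2\cdot s/\ell\big)$ and total preprocessing $\calO\big(n\cdot\log n+m^2\cdot s\cdot\log^2 n/\ell\big)$, into which I substitute $s=\Theta\big(\polylog n\cdot\log\tfrac1\delta/\delta^2\big)$ to recover exactly the asserted bounds.

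At query time, given $\vec x\in\IR^2$, I would interrogate all $s$ single-target structures, tally how many report their $T_i$ visible to obtain $\Vis(\calS,\vec x,\calT)$, and return the ratio $\Vis(\calS,\vec x,\calT)/\Card(\calT)$. The total query time is $s\cdot\calO(\ell\cdot\log n)=\calO\big(\ell\cdot\polylog n\cdot\log\tfrac1\delta/\delta^2\big)$, and by the uniform guarantee from the first step this output approximates the true visibility ratio up to absolute error $\delta$ with high probability, as claimed.

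Since both ingredients are already in hand, I expect no genuine obstacle---only one point demanding care: the complexity bookkeeping separating the \emph{shared} preliminaries (Visibility Graph, the relevant lines, any ray-shooting structure on $\calS$) from the \emph{per-target} structures, so that the $\calO(n\cdot\log n)$ term enters additively rather than being multiplied by $s$. Unlike the divide-and-conquer coupling that obstructed Theorem~\ref{t:OutputSensitive}, the $s$ single-target structures here are built entirely independently, so no correlation complicates their construction and their costs simply add.
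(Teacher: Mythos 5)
Your proposal is correct and follows essentially the same route as the paper, which (in the paragraph preceding the corollary) likewise draws a single polylogarithmic-size sample via Lemma~\ref{l:VCdim}c) and preprocesses each sampled target separately with Scholium~\ref{s:TimeVersusSpace}, querying all structures and returning the empirical ratio. Your extra bookkeeping remark about sharing the Visibility-Graph computation is sound (and in fact not even strictly needed, since $\ell\leq n\leq m$ already makes the $s\cdot n\log n$ term dominated by the per-target costs).
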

Again, note the trade-off between space and query time
gauged by the parameter $\ell$.
And, remembering the paragraph following Lemma~\ref{l:VSPsize},
$m$ is `typically' linear in $n$; hence choosing $\ell=n^{1-\epsilon}$,
the space can be
made arbitrarily close to linear while maintaining sublinear query time,
thus complying with Paradigm~\ref{p:Paradigm}!

\section{Empirical Evaluation} \label{s:Evaluation}
The present section demonstrates the practical applicability
of the algorithm underlying Corollary~\ref{c:Tradeoff}:
It is (not trivial but neither) too hard to implement,
constants hidden in big-Oh notation are modest,
and query time can indeed be traded for memory.

Measurements were obtained on an
\textsf{Intel\textregistered{} Dual\texttrademark2 CPU 6700}
running at 2.66GHz under \textsf{openSUSE~11.0}
equipped with 4GB of RAM. The implementation
is written in \textsf{Java} version 6 update 11.
Calculations on coordinates use exact rational
arithmetic based on \textsc{BigInteger}s.

\subsection{Benchmark Scenes}
We consider three kinds of `virtual scenes' in 2D,
that is collections of non-intersecting line segments,
compare Figure~\ref{f:scenes}:

\begin{figure}[htb]
\includegraphics[width=0.3\textwidth]{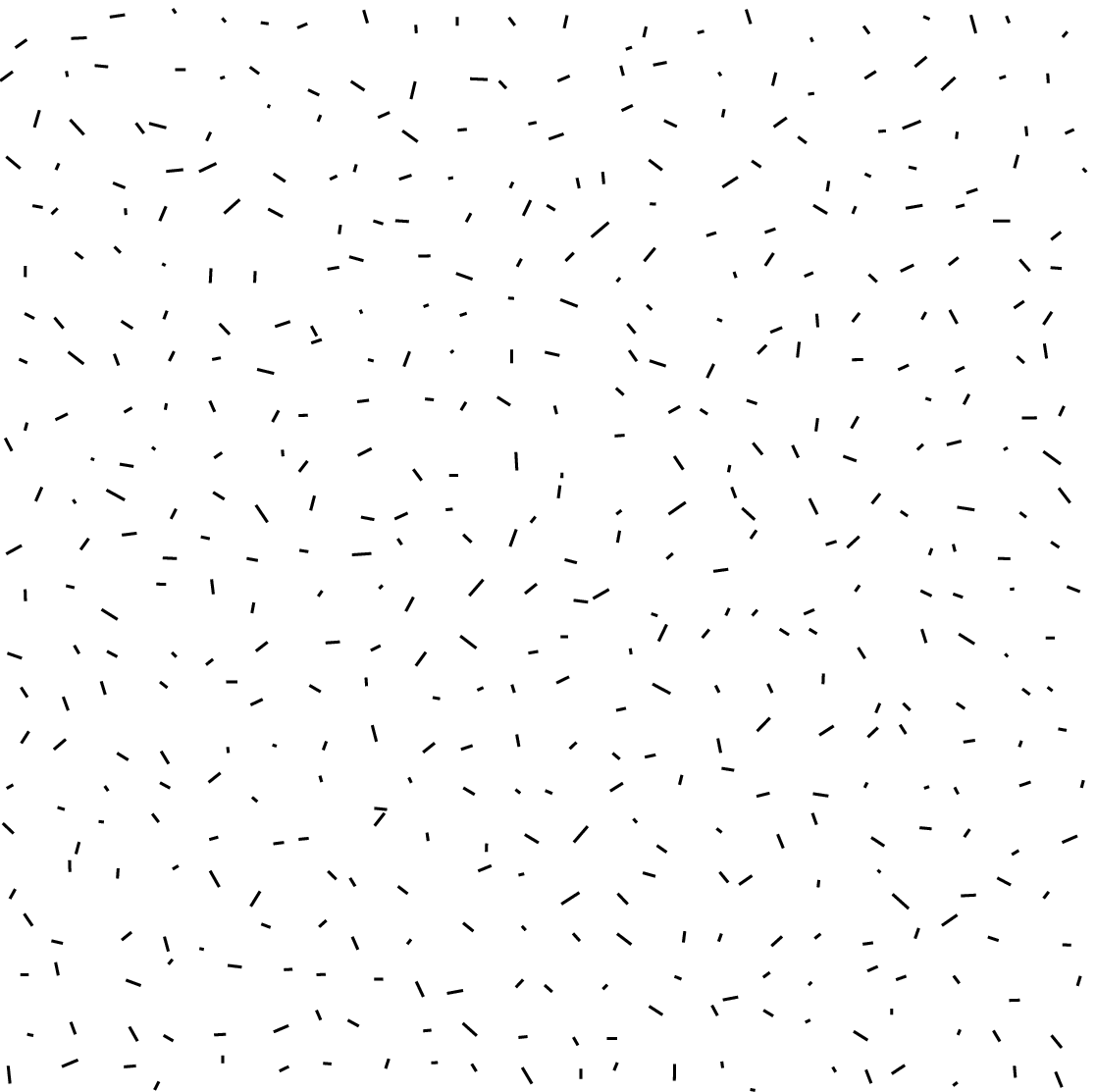}\hfill%
\includegraphics[width=0.3\textwidth]{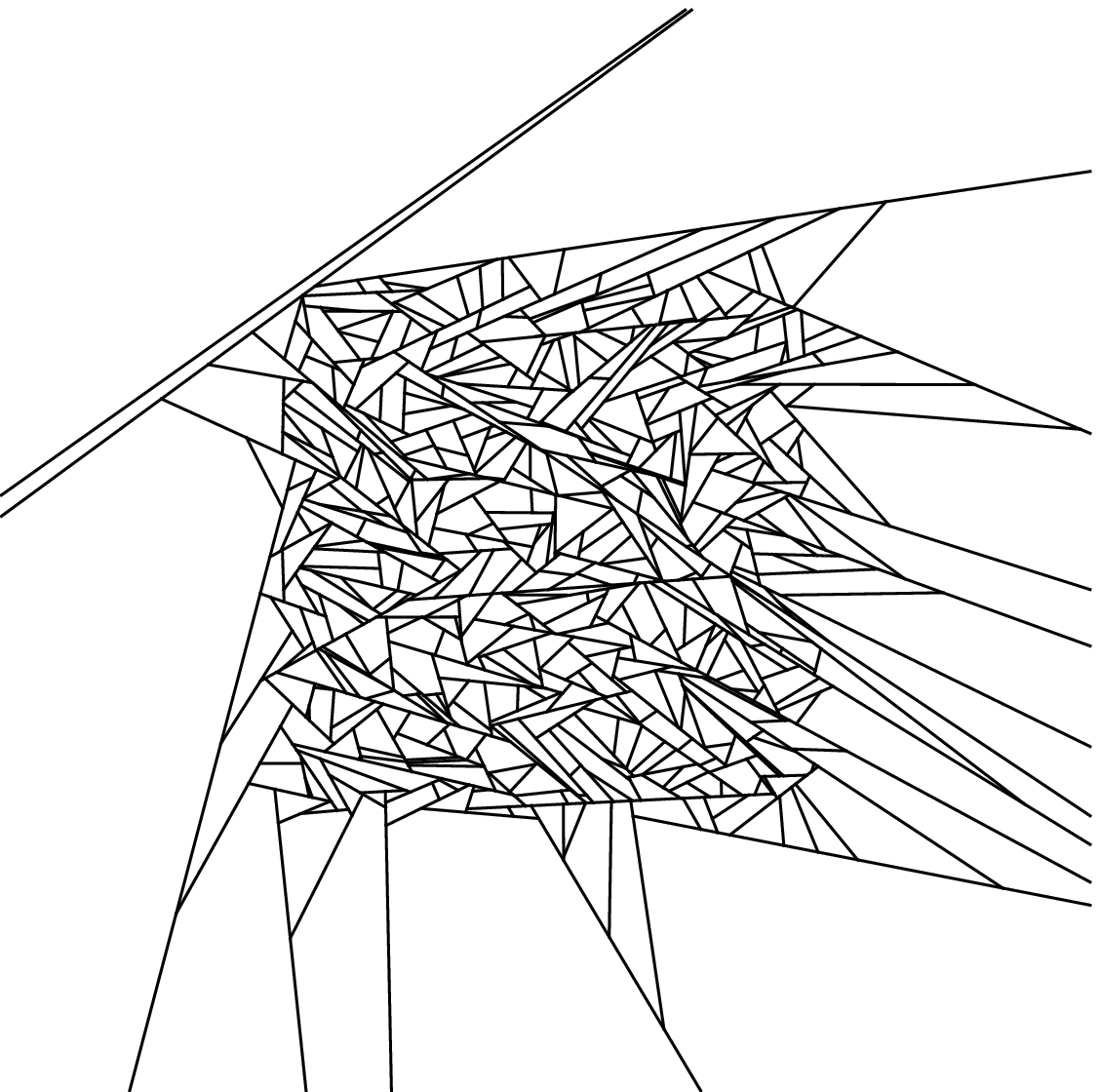}\hfill%
\includegraphics[width=0.3\textwidth]{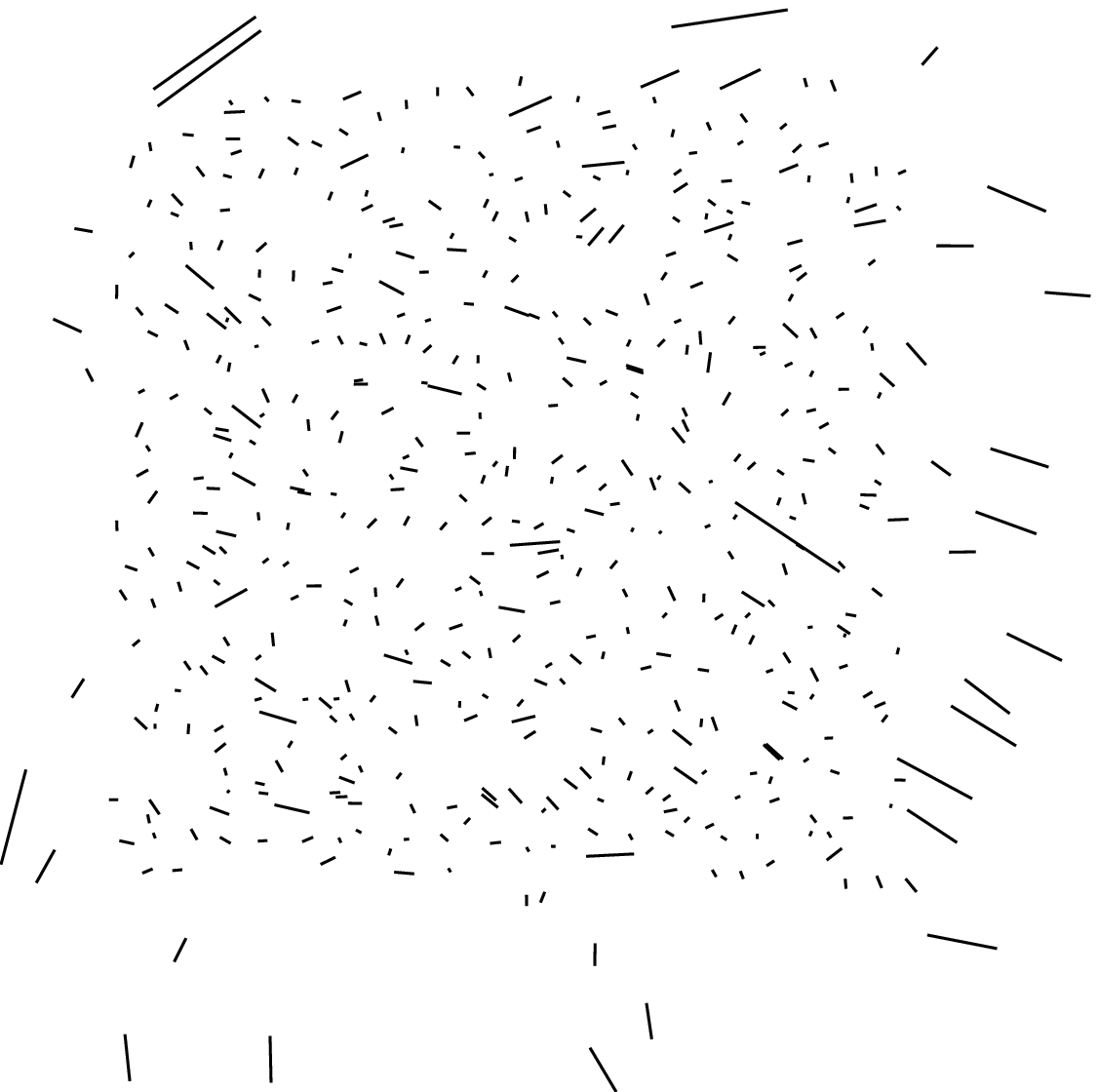}%
\caption{\label{f:scenes}Scene types A, B, and C.}
\end{figure}

\begin{enumerate}\itemsep0pt%
\item[A)] Sparse scenes representing 
  forest-like virtual environments with long-range visibility;
\item[B)] cellular scenes representing
  architectual virtual environments with visibility essentially
  limited to the room the observer is presently in;
\item[C)] and an intermediate of both.
\end{enumerate}
As indicated by the above classification,
these scenes contain some regularity.
More precisely, their respective visibility ratios
obey qualitative deterministic laws,
see Figure~\ref{f:vis}a).
On the other hand these scenes are constructed
using some random process,
which means instances can be made up of any desired size $n$.

Specifically all scenes arise from throwing 
into each square of an $\sqrt{n}\times\sqrt{n}$ grid
one randomly oriented segment.
For Scene~A, these segments are then shrunk by a
factor $\cong\sqrt{n}$ to yield an average visibility count
proportional to $n$, see Figure~\ref{f:vis}a).
For Scene~B, each segment sequentially is grown as to
just touch some other one: remember we want to comply 
with Definition~\ref{d:Nondegenerate};
as expected (and corresponding to Lemma~\ref{l:VSPsize}b)
this results in constant visibility counts.
Scene~C finally arises from Scene~B 
by shrinking the segments again;
here the visibility count grows roughly proportional 
to $n^{0.3}$.

\begin{figure}[htb]
\hspace*{-2ex}%
\includegraphics[width=0.51\textwidth,height=51ex
]{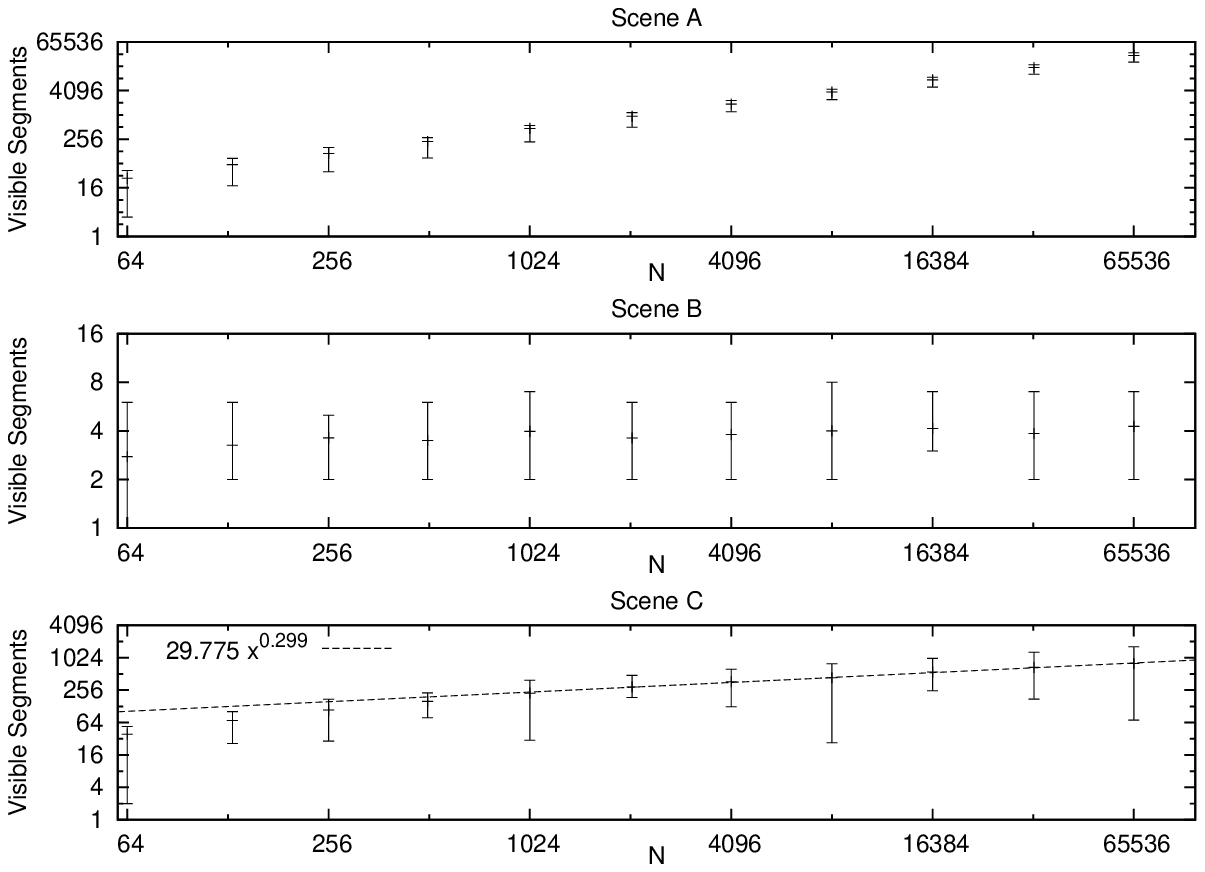}\hfill%
\includegraphics[width=0.51\textwidth,height=51ex
]{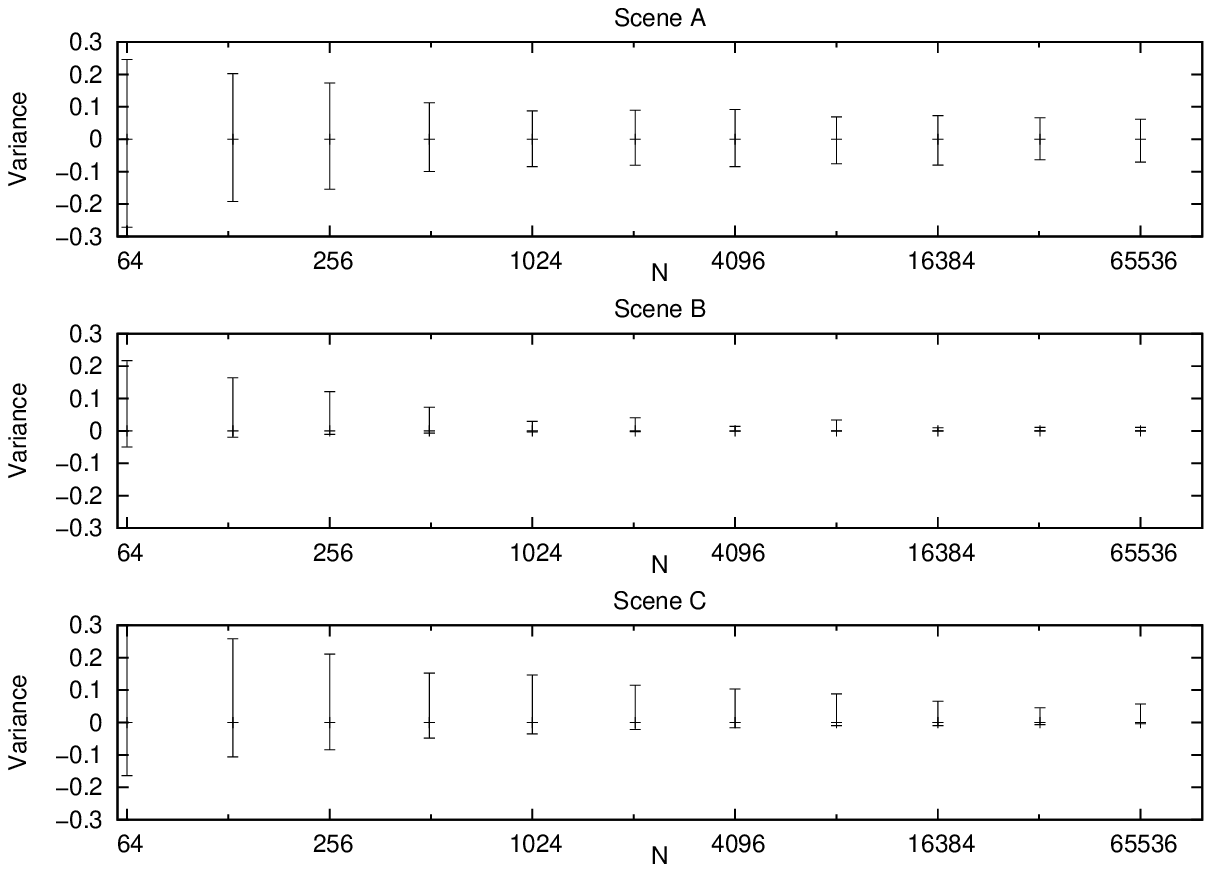}\hspace*{-1ex}%
\caption{\label{f:vis}a) Maximum/average/minimum
visibility counts for Scenes~A to C.\newline
b) Variance of the output of the randomized algorithm.}
\end{figure}

Figure~\ref{f:vis}b) indicates the quality of approximation
attained by Corollary~\ref{c:Tradeoff}: Recall that the
preprocessing step randomly selects $k$ elements of the scene 
as targets; and the proof shows that \emph{asymptotically}
this sample is `representative' for the entire scene with
high probability. Our implementation chose $k=10\cdot\log^2(n)$
which turned out to yield a practically good approximation indeed.
More precisely, Figure~\ref{f:vis}b) displays the $1\sigma$
confidence interval estimates for scenes~A) and C) from
various viewpoints, normalized to (i.e. after subtracting the)
mean 0.

\subsection{Memory Consumption versus Query Time} \label{s:MemoryTime}
Preprocessing space and query time are two major resource 
contraints for many applications such as the one we aim at.
We have thus performed extensive measurements of these 
quantities for the scene types A) to C) mentioned above.
It turns out that for A) our data structure takes
roughly linear space and for B) roughly quadratic one,
whereas for C) it grows strictly stronger;
see Figure~\ref{f:MemoryTime}a).
Here we refer to the setting $\ell=1$.
For scene~C) we have additionally employed the trade-off 
featured by Corollary~\ref{c:Tradeoff} to reduce the
memory consumption at the expense of query time;
specifically, scene~E) means scene~C) with 
$\ell:=n^{1/4}$, and scene~D) refers to 
$\ell:=n^{1/2}$. It turns out that the latter
effectively reduces the size to quadratic,
see Figure~\ref{f:MemoryTime}a).

\begin{figure}[htb]
\hspace*{-2ex}%
\includegraphics[width=0.51\textwidth]{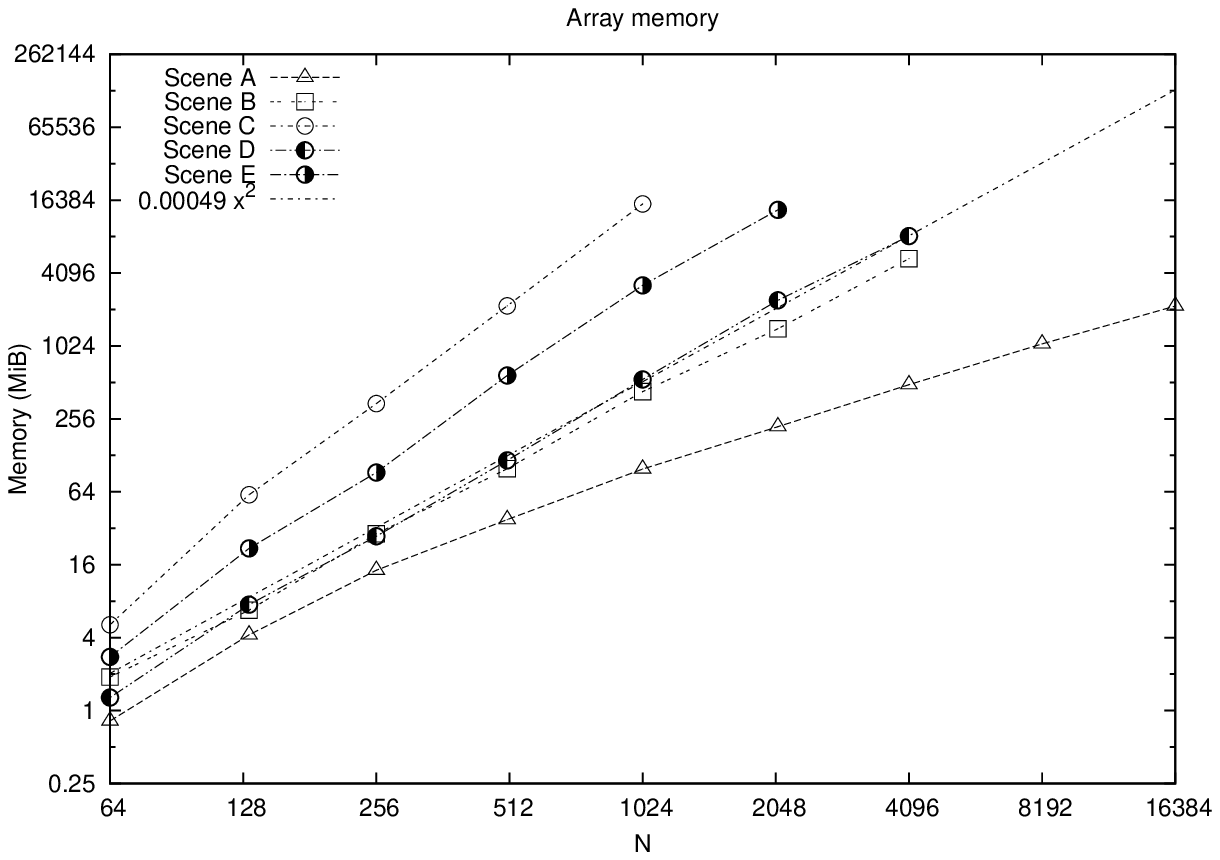}\hfill%
\includegraphics[width=0.51\textwidth]{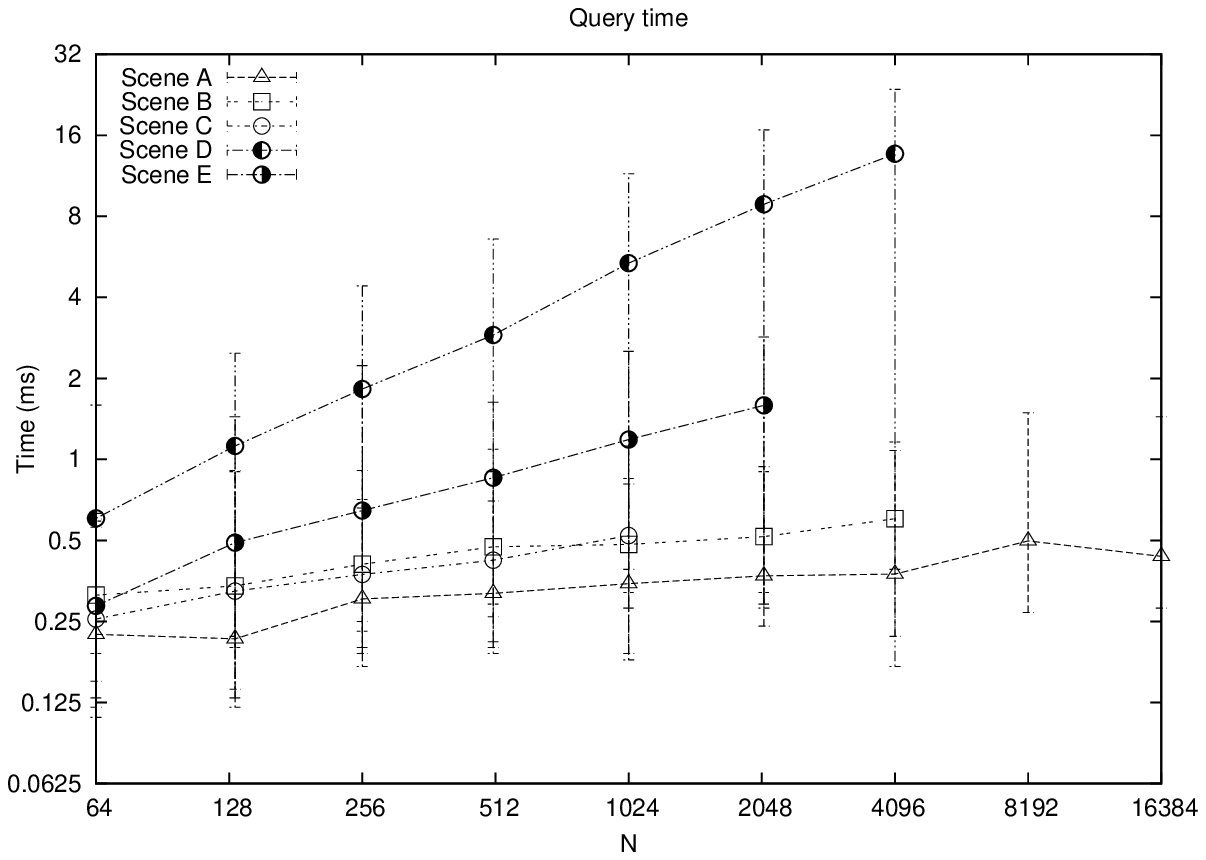}\hspace*{-1ex}%
\caption{\label{f:MemoryTime}
Log-log plot of a) preprocessing memory 
and of b) query time for scenes~A) to E).}
\end{figure}

On the other hand, the saved memory is paid for by
an increase in query time; cf. Figure~\ref{f:MemoryTime}b).
Indeed at scenes of size $n\approx20,000$ each query is estimated
to take about $50\text{ms}$, that is as long as one frame may last
at an interactive rate of $20\text{fps}$. Whereas scene~E),
that is scene~C) with $\ell=n^{1/4}$ instead of $\ell=n^{1/2}$,
is estimated to still remain far below this limit for much, much
larger scenes $n$; not to mention scenes~A) to C), i.e.
with $\ell=1$.

\subsection{Conclusion}
Our benchmarks range up to $n\approx 8,000$
when the data structure hit an overall memory
limit of 16GB.
This may first seem to fall far short of the original sizes
aimed at in Section~\ref{s:Culling}.
On the other hand,
\begin{itemize}\itemsep0pt%
\item the measurements obtained turn out to depend
smoothly on $n$ and thus give a sufficient indication of, and permit
to extrapolate with convincing significance, the behavior
on larger scenes.
\item By proceeding from single geometric simplicies to entire 
`virtual objects' (like e.g. a house or a car) as rendering primitives,
one can in practice easily save a factor of 100 or 1000.
\item Temporal and spacial coherence of an observer moving within
a virtual scene suggests that visibility counting queries need
not be performed in each frame separately. 
Moreover our CPU-based algorithm can be run 
concurrently to the graphics processing unit (GPU).
These two improvements in running time can then be
traded for an additional saving in memory.
\item In order to attain and access the 16GB mentioned above,
we employed secondary storage (a harddisk):
with an unfavorable increase in preprocessing time
but suprisingly little effects on the query time.
\end{itemize}
These observations suggest that our algorithm's practicality
can be extended to $n$ much larger than the above $8,000$;
yet doing so is beyond the purpose of the present work.

It is thus fair to claim as
main benefit of our contribution in Corollary~\ref{c:Tradeoff}
an (as opposed to Corollary~\ref{c:Coarse}) practically relevant
approach to approximate visibility counting based on the ability
to trade (otherwise prohibitive quartic) preprocessing space for
(otherwise almost neglectible logarithmic) query time.

\section{Perspectives} \label{s:Perspective}
\begin{enumerate}
\item[a)]
We have treated the observer's position $\vec x$ as an input
newly given from scratch for each frame. 
In practice however $\vec p$ is more likely to move continuously
and with bounded velocity through the scene. 
This should be exploited algorithmically, e.g. in form of a
\textsf{visibility \emph{count} maintenance} problem \mycite{Section~3.2}{Flatland2}.
\item[b)]
How does Theorem~\ref{t:TimeVersusSpace}
extend from 2D to 3D,
what is the typical size of a 3D VSP?
\item[c)]
The quartic worst-case size of 2D VSPs (and quadratic typical
yet even of order $n^9$ for 3D) arises from visibility considered
with respect to \emph{perspective} projections; whereas for
\emph{orthographic} projections, it drops to $\calO(n^2)$
(in 2D; in 3D: order $n^6$) \cite{AspectSurvey}.
\item[d)] 
The counterexamples in Lemma~\ref{l:VSPsize}a) 
and Lemma~\ref{l:VCdim}d) and also Proposition~\ref{p:Coarse}a) 
employ (after scaling the
entire scene to unit size) very short and/or
very close segments. We wonder if such worst cases can be
avoided in the \emph{bit} cost model, i.e. with respect
to $n$ denoting the total binary length of the
scene description on an integer grid.
\COMMENTED{
\item[e)]
We have evaluated the benefit of visibility numbers 
as a guide for 
when to apply occlusion culling and when to turn it off. 
It remains to devise and analyze 
culling algorithms which 
permit, through an additional input parameter $u$
\emph{between} 0 (brute-force: render everything)
and 1 (render precisely those objects at least partly visible).
\item[f)]
Controlling occlusion culling according to visibility counts
amounts to \emph{input-}adaptive algorithms (Section~\ref{s:Adaptive}).
It seems equally important and promising to attack
\emph{hardware-}adaptivity (recall Section~\ref{s:Unrelated})
and, specifically, to take into account the semantics
and asynchronity of contemporary \texttt{GL\_OCCLUSION} tests.
}
\end{enumerate}

\subsection{Remarks on Lower Bounds}
We have presented various data structures and algorithms 
for visibility counting, trading
preprocessing space for query time.
It would be most interesting to complement these results
by corresponding lower bounds of the form:
preprocessing space $s$ requires,
in some appropriate model of computation, 
query time at least $\Omega\big(f(s)\big)$.
Unfortunately techniques for Friedman's \textsf{Arithmetic Model}, 
which have proven so very successful for range query problems
\cite{LowerBound}, do not apply to the non-faithful
semigroup weights of geometric counting problems,
not to mention decision Problem~\ref{p:Visibility};
and approximate, rather than exact, counting
makes proofs even more complicated.

On the other hand, we do have some lower bounds:
namely on the sizes of VSPs and relaxed VSPs
in Lemma~\ref{l:VSPsize}a) and Proposition~\ref{p:Coarse}a+b).
These immediately translate to lower bounds on the sizes
of \textsf{Linear Branching Programs},
based on the observation that any such
program needs a different leaf for each different
convex cell of inputs leading to the same output value,
compare \cite{Vio}.

But then again this seemingly natural model of computation
is put into question when considering the algorithms
from Sections~\ref{s:IntervalTree} and \ref{s:RotSweep}:
Both have, in spite of Lemma~\ref{l:VSPsize}a), 
merely (weakly) linear size. Superficially,
Lemma~\ref{l:RotSweep} seems to employ transcendental
functions for angle calculations; but these can be
avoided by comparing slopes instead of angles---however
employing divisions. These, again, can be replaced:
yet multiplications do remain and make the algorithm
inherently \emph{non}linear a branching program.

\subsection{Visibility in Dimensions $>2$} \label{s:3SUM}
We had deliberately restricted to the planar case of line segments.
Many virtual scenes in interactive walkthrough applications
can be described as $2\tfrac{1}{2}$-dimensional:
buildings of various heights yet rooted on a common plane.

But how about the full 3D case?
Here we observe a quadratic 
`almost' lower bound on the joint running time
of preprocessing and querying for the 3D
counterpart to Problem~\ref{p:Visibility}.
To this end recall the following famous

\begin{problem}[3SUM]
Given an $n$-element subset $S$ or $\IR$,
do there exist $a,b,c\in S$ such that $a+b+c=0$ ?
\end{problem}
It admits an easy algebraic $\calO(n^2)$-time algorithm
but is not known solvable in subquadratic time. Similar
to Boolean Satisfiability (\textsc{SAT}) and the theory
of $\calNP$-completeness, \textsc{3SUM} has led to a 
rich family of problems mutually reducible one to another
in softly linear time $\calO(n\cdot\polylog n)$ and hence
called 3SUM-\emph{complete}; for example
it holds \mycite{Section~6.1}{Gajentaan}:

\begin{fact}
Given a collection $\calS$ of opaque horizontal triangles in space,
one further horizontal triangle $T$, and a viewpoint $\vec p\in\IR^3$.
The question of whether some point of $T$ is visible from $\vec p$
through $\calS$ (called \textsf{Visible-Triangle})
is 3SUM-complete.
\end{fact}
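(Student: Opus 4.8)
The claim is one of \emph{completeness}, so two softly-linear reductions are needed: one \emph{to} \textsf{Visible-Triangle} from a problem already known to be \textsc{3SUM}-complete (establishing \textsc{3SUM}-hardness), and one \emph{from} \textsf{Visible-Triangle} back to \textsc{3SUM} (membership in the class). The plan is to anchor both on the planar problem \textsf{Triangles-Cover-Triangle}---given $n$ triangles and a target triangle in the plane, is the target contained in their union?---which \mycite{Section~6}{Gajentaan} already places in the \textsc{3SUM}-complete class. The genuinely new content is then a single geometric device that identifies \textsf{Visible-Triangle} with this covering problem up to a linear-time transformation.

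That device is a \emph{central-projection} argument exploiting the hypothesis that all triangles are horizontal. Fix $\vec p$; for the reduction, place it at the origin, the target $T$ in the plane $z=2$, and the occluders in the plane $z=1$ (occluders at other heights merely cast differently scaled shadows and are treated identically, and opacity permits them to overlap). The open segment from $\vec p$ to a point $\vec q\in T$ meets the plane $z=1$ in the single point $\tfrac12\vec q$, so $\vec q$ is hidden by an occluder $\Delta$ exactly when $\tfrac12\vec q\in\Delta$, i.e.\ when $\vec q$ lies in the scaled copy $2\Delta$ (the \emph{shadow} of $\Delta$ on the plane of $T$). Hence $T$ is weakly visible from $\vec p$ if and only if $T\not\subseteq\bigcup_\Delta 2\Delta$, which is precisely a \textsf{Triangles-Cover-Triangle} instance. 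Run backwards, the same identity shows that an arbitrary planar covering instance is realized by placing the half-scaled triangles $\tfrac12\tau$ at height $z=1$; thus the two problems are inter-reducible in linear time and with only a constant blow-up in coordinate bit-length.

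It remains to recall why \textsf{Triangles-Cover-Triangle} is \textsc{3SUM}-complete, which I would import from \mycite{Section~6.1}{Gajentaan} rather than re-derive. For hardness one reduces through \textsf{GeomBase} (points on the lines $y\in\{0,1,2\}$, asking for a collinear triple, equivalently $a-2b+c=0$): each number is turned into a narrow gap between covering triangles on its line, arranged so that an \emph{uncovered} point of the target survives exactly when three gaps line up, i.e.\ when the corresponding \textsc{3SUM}-type equation holds. Membership follows by reducing a covering instance back to \textsc{3SUM} in time $\calO(n\cdot\polylog n)$, again as in \cite{Gajentaan}. The main obstacle I anticipate is not the combinatorics but the boundary bookkeeping: ``some point of $T$ visible'' is an \emph{open} condition (recall the relatively-open segment $[\vec p,\vec q]^\circ$ of Definition~\ref{d:Count}), whereas the covering triangles are naturally closed, so one must either adopt a consistent open/closed convention or apply a generic infinitesimal perturbation so that coverage of a full-dimensional target is detected correctly---all while keeping coordinates rational of polynomially bounded length and the overall reduction softly linear.
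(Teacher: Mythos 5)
Your proposal is correct and is essentially the paper's own approach: the paper supplies no argument of its own but imports this Fact from \mycite{Section~6.1}{Gajentaan}, and your central-projection reduction identifying \textsf{Visible-Triangle} with the complement of \textsf{Triangles-Cover-Triangle} (whose \textsc{3SUM}-completeness is in turn imported via \textsf{GeomBase}) is precisely the argument of that cited source. The open-versus-closed boundary bookkeeping you flag is real but routine and is resolved there in the same way.
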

In particular there is no 3D counterpart to the Interval Tree
solving the corresponding 2D problem in time $\calO(n\cdot\log n)$,
recall Section~\ref{s:IntervalTree}.


\end{document}